\title{Space-efficient binary optimization for variational computing}
\author[1]{Adam Glos\thanks{aglos@iitis.pl}}
\author[1]{Aleksandra Krawiec}
\author[2,3]{Zolt\'an Zimbor\'as}
\affil[1]{Institute of Theoretical and Applied Informatics, Polish Academy of 
Sciences, ul. Ba{\l}tycka 5, 44-100 Gliwice, Poland}
\affil[2]{Wigner Research Centre for Physics, H-1525,  P.O.Box 49, Budapest, Hungary}
\affil[3]{BME-MTA Lend\"ulet Quantum Information Theory Research Group, Budapest, Hungary}
\date{}
\newcommand{\ii}{\mathrm i}
\newcommand{\ee}{\mathrm e}
\newcommand{\RR}{\mathbb{R}}
\newtheorem{theorem}{Theorem}
\begin{document}

\twocolumn
\maketitle
\begin{abstract} 
In the era of Noisy Intermediate-Scale Quantum (NISQ) computers it is crucial to
design quantum algorithms which do not require many qubits or deep circuits.
Unfortunately, the most well-known quantum algorithms are too demanding to be
run on currently available quantum devices. Moreover, even the state-of-the-art
algorithms developed for the NISQ era often suffer from high space complexity
requirements for particular problem classes.

In this paper, we show that it is possible to greatly reduce the number of
qubits needed for the Travelling Salesman Problem (TSP), a paradigmatic
optimization task, at the cost of having deeper variational circuits. While the focus is on
this particular problem, we claim that the approach can be generalized for other
problems where the standard bit-encoding is highly inefficient. Finally, we also
propose encoding schemes which smoothly interpolate between the qubit-efficient
and the circuit depth-efficient models.
All the proposed encodings remain efficient to implement within the Quantum Approximate Optimization Algorithm framework.


\end{abstract} 


\section{Introduction} During the past few decades, a wealth of quantum
algorithms have been designed for various problems, many of which offered a
speedup over their classical equivalents
\cite{grover1996fast,shor1999polynomial, montanaro2016quantum,
	jordan2011quantum}. The theoretical developments have also been complemented by
progress on the experimental side. Indeed, the demonstration of quantum
supremacy by Google \cite{arute2019quantum} indicates  that in the near future 
useful quantum technology may be available. 
However, current Noisy Intermediate-Scale Quantum (NISQ) devices are too small
and too noisy to implement complicated algorithms like Shor's factorization
algorithm \cite{shor1999polynomial} or  Grover-search based optimizers
\cite{montanaro2015quantum, ambainis2019quantum}. 
This resulted in a new field of quantum computation which focuses on designing
new algorithms requiring significantly less noisy qubits.

Optimization is a problem which seems to be particularly suitable for current
NISQ devices. In particular, the Variational Quantum Eigensolver (VQE) \cite{peruzzo2014variational, mcclean2016theory, moll2018quantum} 
seems to be the state-of-the-art algorithm for solving molecule Hamiltonians. 
Although it can
solve optimization problems defined over discrete spaces, so-called
combinatorial optimization problems, quantum annealing and Quantum Approximate
Optimization Algorithm (QAOA) \cite{farhi2014quantum} are considered to be more suitable.

For all of these algorithms, the original combinatorial optimization problem has
to be transformed into the Ising model. Typically, one starts with a high-level
description, like the Max-Cut problem, where nodes in the graph $G=(V,E)$ have 
to be colored either red or black, so that certain function is minimized. 
Then, one has to
transform it to a pseudo-Boolean polynomial $\sum_{\{u,v\}\in E} (b_u-b_v)^2$.
Each binary variable (bit) $b_v$ denotes the color of the node $v$ in the 
graph. For
example we can choose $b_v=1$ for red color, and $b_i=0$ for black color. For
quantum algorithms it is convenient to change the representation into Ising
model via transformation $b_v \leftarrow (1-Z_v)/2$. Here $Z_v$ is a Pauli operator
acting on a qubit corresponding to the node $v$. By transforming the original
objective function into Hamiltonian, we also change the domain of the problem
into the space of quantum states.

Quantum optimization algorithms differ in the way how they solve the problem.
Variational Quantum Eigensolver (VQE) is a heuristic algorithm in which the
quantum circuit is optimized using classical procedure. More precisely, we are
given an \emph{ansatz} $U(\theta)$ which, after fixing the parameter $\theta$,
produces a state $\ket{\theta} = \prod_{i=1}^k U_i(\theta_i )\ket{0}$. The
vector $\theta$ is optimized using classical optimization procedure like
gradient descent or simultaneous perturbation stochastic approximation (SPSA), 
so that $\ket{\theta}$ will be localized at high
quality answer. Due to its generality, VQE is commonly used for molecule
Hamiltonians, however its usability to the classical optimization problems may be limited.

Quantum annealing theoretically can also be applied for chemistry Hamiltonians,
however current machines restrict their usability to combinatorial optimization
problems. The algorithm turns the ground state of initial Hamiltonian $H_{\rm
	mix} = \sum_i X_i$ into a ground state of objective Hamiltonian $H$ through
adiabatic evolution $g(t) H_{\rm mix} + (1-g(t)) H$. Adiabatic theorem provides
a good premise for high-quality solutions of the problem. Furthermore, while
available D-Wave's annealers have thousands of qubits, the topology restrictions
may limit the size of tractable problems to cases solvable by classical
procedures \cite{choi2008minor,choi2011minor}.

Quantum Approximate Optimization Algorithm (QAOA) is a mixture of the methods 
described
above \cite{farhi2014quantum}. While quantum annealing is a continuous process, QAOA interchangeably
applies both $H_{\rm mix}$ and $H$ for some time. The evolution time is a 
parameter of the evolution, and as it was in the case of VQE, they are adjusted
by external classical procedure. Here the resulting state takes the form
\begin{equation}
\ket{\theta} = \prod_{i=1}^r \exp(-\ii \theta_{\rm mix,i} H_{\rm mix}) \exp(-\ii \theta_{\rm obj,i} H) \ket{+},
\end{equation}
where $r$ is the number of levels. The algorithm can be implemented as long as
both mixing and objective Hamiltonians can be implemented, which in particular
allows for applying it to combinatorial problems. Many studies have been performed to characterize properties of QAOA algorithms, including, both rigorous proofs of computational power and reachability properties \cite{morales2020universality, lloyd2018quantum, hastings2019classical, farhi2020quantum} as well as characterization through heuristics and numerical experiments and extensions of the algorithm \cite{hadfield2019quantum,  akshay2020reachability, zhu2020adaptive, wierichs2020avoiding}.

While the proposed quantum algorithms can theoretically solve arbitrary
combinatorial optimization problems, not all pseudo-Boolean polynomials can be
natively considered for current quantum devices. A general pseudo-Boolean
polynomial takes the form
\begin{equation*}
H(b) = \sum_{I\subseteq \{1,\dots,n\}} \alpha_I \prod_{i\in I} b_i,
\end{equation*}
where $\alpha_I\in \RR$ defines the optimization problem. An order of such
Hamiltonian is a maximum size of $S$ for which $\alpha_S$ is nonzero. Current
D-Wave machines are restricted to polynomials of order 2, hence if one would
like to solve Hamiltonians of higher order, first a quadratization procedure has
to be applied, in general at cost of extra qubits
\cite{boros2012quadratization}. Note that optimization of quadratic polynomials,
Quadratic Unconstrained Binary Optimization (QUBO) is NP-complete, hence it
encapsulates most of the relevant problems.

The objective Hamiltonian for Max-Cut requires $n$ qubits for graph of order
$n$. Hence it can encode $2^n$ solutions, which is equal to the number of all
possible colorings. However, this is not the case in general. For example for
Traveling Salesman Problem (TSP) over $N$ cities, the QUBO requires $N^2$ bits
\cite{lucas2014ising}. However, $N!$ permutations requires only $\lceil
\log_2(N!)\rceil \approx N\log N$ bits. We consider this as a waste of
computational resources. Unfortunately, in general polynomials with optimal
number of qubits have order larger than two, thus we are actually dealing with
higher-order binary optimization, which is currently not possible using D-Wave
machines.

The idea of using higher-order terms is not new. In fact, in the original work
of Farhi et al \cite{farhi2014quantum}, the authors have not restricted the
model to two-local model. Furthermore, Hamiltonian of order 4 was used for variational quantum factoring~\cite{anschuetz2019variational}, while Hamiltonian of order $l$ was constructed for Max-$l$-SAT problem \cite{hadfield2019quantum}. Since the terms of arbitrary
order can be implemented efficiently, QAOA for the problem can reduce the number
of required logical qubits. In general, if objective polynomial is of constant
order $\alpha$, then the circuit of order $\order{\binom{n}{\alpha} \log
	\alpha}$ implements the objective Hamiltonian exactly. While the number may be
large, it is still polynomial, which makes the implementation tractable.
However, even for slowly growing $\alpha$ (say $\order{\log n}$), in general the
number of terms grows exponentially, which could be the case for $l\to\infty$ in
Max-$l$-SAT. Note that even an encoding that requires only logarithmic number of qubits has been introduced \cite{tan2020qubit}, however, the minimizer of this encoding does not necessarily map to the minimizer of the original problem.


Furthermore, when dealing with unbounded order, one has to be careful when
transforming QUBO into the Ising model. Let us consider a polynomial
$2^n\prod_{i=1}^nb_i$. Using default transformation $b_i\leftarrow (1-Z_i)/2$
will produce Hamiltonian $\sum_{I\subseteq \{1,\dots, n\}} (-1)^{|I|} 
\prod_{i\in I} Z_i$, which consists of $2^n$ terms \cite{mandal2020compressed}.
For this example, one can easily find a better Hamiltonian $- \sum_{i=1}^N Z_i$,
that shares the same global minimizer, however, in general finding such a
transformation requires a higher-level understanding of the problem. Note that
this is not an issue for constant-order polynomials, as the number of terms is
guaranteed to be polynomial even in the worst case scenario.

Despite the potential issues coming from utilizing unbounded-order
polynomials, we present a polynomial (encoding) for TSP problem with unbounded
order, which can be efficiently implemented using approximately optimal
number of qubits. Furthermore, our model requires fewer measurements for
estimating energy. We also developed a transition encoding, where one can
adjust the improvement in the required number of qubits and circuit's depth. 
Finally, QAOA optimizes our encoding with similar or  better efficiency 
compared to the state-of-the-art QUBO encoding of TSP.

\paragraph{State-of-the-art TSP encoding } Current state-of-the-art encoding of
TSP problem can be found in the paper by Lucas
\cite{lucas2014ising}. Let us consider the Traveling Salesman Problem (TSP)
over $N$ cities. Let $W$ be a cost matrix, and $b_{ti}$ be a binary variable
such that $b_{ti}=1$ iff the $i$-th city is visited at time $t$. The QUBO
encoding takes the form \cite{lucas2014ising}
\begin{equation*}
\begin{split}
H^{\rm QUBO}(b) &= A_1 \sum_{t=1}^N \left( 1 -\sum_{i=1}^N b_{ti} \right)^{\!\!2} + A_2 \sum_{i=1}^N \left( 1 -\sum_{t=1}^N b_{ti} \right)^{\!\!2} \\
&\phantom{\ =}+ B\sum_{\substack{i,j=1\\i\neq j}}^N W_{ij} \sum_{t=1}^N b_{ti}b_{t+1,j}.
\end{split}
\end{equation*}
Here $A_1,A_2> B\max _{i\neq j}W_{ij}$ are parameters that have to be adjusted
during the optimization. We also assume $N+1\to1$ simplification for the
indices. Note that any route can be represented in $N$ different ways, depending
on which city is visited at time $t=1$. Such redundancy can be solved by fixing
that the first city should be visited at time $t=1$. Thanks to that, $n=(N-1)^2$
(qu)bits in total are required.

In the scope of this paper we will take advantage of various quality 
measures of encodings. First, since the 
Hamiltonian has to be implemented directly, we prefer encodings with possibly
small depth. In this manner, QUBO can be simulated with a circuit of depth 
$\order{n}$ using round-robin scheduling, see Fig.~\ref{fig:schedule}. 

However, QUBO encoding for TSP is inefficient in the number of qubits. Using
Stirling formula one can show that $\lceil \log(N!)\rceil = N \log(N) - N
\log(\ee) + \Theta(\log(N))$ qubits are sufficient to encode all possible
routes, which is significantly smaller than $N^2$. Note that the number of
qubits also has an impact on the volume  of the circuit, traditionally defined
as a product of the number of qubits and the circuit's depth. In case of this
encoding the volume is of order $\order{N^3}$.  

Finally, in the paper we also consider the required number of measurements to
estimate the energy within constant additive error. Instead of estimating each
term of the Hamiltonian independently, which has to be done for VQE, we consider
the measurement's output as a single sample. This way, using Hoeffding's
theorem, QUBO encoding requires $\order{N^3}$ measurements.

\begin{figure*}\centering

	\begin{tikzpicture}
	
	\newcommand{\sca}{0.6}  
	\newcommand{\dist}{0.05} 
	\newcommand{\cityheight}{3.8}  
	
	\newcommand{\blacksqtspqubo}[2]{
		\draw [fill=black] (\sca*#1+\dist,\sca*#2+\dist) rectangle 
		(\sca*#1 + \sca - \dist,\sca*#2 + \sca - \dist);}
	\newcommand{\whitesqtspqubo}[2]{
		\draw [] (\sca*#1+\dist,\sca*#2+\dist) rectangle 
		(\sca*#1 + \sca-\dist,\sca*#2 + \sca-\dist);}
	
	
	\newcommand{\blacksqtsphobol}[2]{
		\draw [fill=black] (\sca*#1 + 3*\dist,\sca*#2 + \dist) rectangle 
		(\sca*#1 + \sca + \dist,\sca*#2 + \sca - \dist);}
	\newcommand{\whitesqtsphobol}[2]{
		\draw [] (\sca*#1 + 3*\dist,\sca*#2 + \dist) rectangle 
		(\sca*#1 + \sca + \dist,\sca*#2 + \sca - \dist);}
	\newcommand{\blacksqtsphobor}[2]{
		\draw [fill=black] (\sca*#1 - \dist,\sca*#2 + \dist) rectangle 
		(\sca*#1 + \sca - 3*\dist,\sca*#2 + \sca - \dist);}
	\newcommand{\whitesqtsphobor}[2]{
		\draw [] (\sca*#1 - \dist,\sca*#2 + \dist) rectangle 
		(\sca*#1 + \sca - 3*\dist,\sca*#2 + \sca - \dist);}
	
	
	\newcommand{\blacksqtspmixedl}[2]{
		\draw [fill=black] (\sca*#1 + 2*\dist,\sca*#2 + \dist) rectangle 
		(\sca*#1 + \sca,\sca*#2 + \sca - \dist);}
	\newcommand{\blacksqtspmixedr}[2]{
		\draw [fill=black] (\sca*#1,\sca*#2 + \dist) rectangle 
		(\sca*#1 + \sca - 2*\dist,\sca*#2 + \sca - \dist);}
	\newcommand{\whitesqtspmixedl}[2]{
		\draw [] (\sca*#1 + 2*\dist,\sca*#2 + \dist) rectangle 
		(\sca*#1 + \sca,\sca*#2 + \sca - \dist);}
	\newcommand{\whitesqtspmixedr}[2]{
		\draw [] (\sca*#1,\sca*#2 + \dist) rectangle 
		(\sca*#1 + \sca - 2*\dist,\sca*#2 + \sca - \dist);}

	\node at (0,0) {
	\begin{tikzpicture}[transform shape,line width=0.2pt]
	\foreach \x in {1,...,6}{%
		\pgfmathparse{-(\x)*60+120}
		\node[draw,circle,inner sep=0.1cm] (N-\x) at 
		(\pgfmathresult:2.cm) [thick] {$C_\x$};
	} 
	\foreach \x [count=\xi from 1] in {2,...,6}{%
		\foreach \y in {\x,...,6}{%
			\path (N-\xi) edge[-] (N-\y);
		}
	}
	
	\draw [ultra thick, ->, color=blue] (N-1) -- (N-3);
	\draw [ultra thick, ->, color=blue] (N-3) -- (N-6);
	\draw [ultra thick, ->, color=blue] (N-6) -- (N-2);
	\draw [ultra thick, ->, color=blue] (N-2) -- (N-4);
	\draw [ultra thick, ->, color=blue] (N-4) -- (N-5);
	\draw [ultra thick, ->, color=blue] (N-5) -- (N-1);
	\end{tikzpicture}};
	
	\node at (6,2) {
	\begin{tikzpicture}[scale=.8]  
	\foreach \x in {0,...,5}
	\foreach \y in {0,...,5} 
	\whitesqtspqubo{\x}{\y};
	
	\blacksqtspqubo{0}{5}
	\blacksqtspqubo{1}{2}
	\blacksqtspqubo{2}{4}
	\blacksqtspqubo{3}{1}
	\blacksqtspqubo{4}{0}
	\blacksqtspqubo{5}{3}
	
	\node (t0) at (-.4, \sca*5 + \sca/2) {$b_1$};
	\node (t1) at (-.4, \sca*4 + \sca/2) {$b_2$};
	\node (t2) at (-.4, \sca*3 + \sca/2) {$b_3$};
	\node (t3) at (-.4, \sca*2 + \sca/2) {$b_4$};
	\node (t4) at (-.4, \sca*1 + \sca/2) {$b_5$};
	\node (t5) at (-.4, \sca*0 + \sca/2) {$b_6$};

	\node (Time) at (-1.2, 2.8*\sca) {\rotatebox[]{90}{Time}};
	\end{tikzpicture}};
	
	\node at (10.5,2) {\begin{tikzpicture}[scale=.8]  
		
		\foreach \y in {0,...,5} 
		\whitesqtsphobol{0}{\y};
		
		\foreach \y in {0,...,5} 
		\whitesqtspqubo{1}{\y};
		
		\foreach \y in {0,...,5} 
		\whitesqtsphobor{2}{\y};
		
		\blacksqtspqubo{1}{4}
		
		\blacksqtsphobol{0}{3}
		\blacksqtsphobor{2}{3}
		
		\blacksqtsphobor{2}{2}
		
		\blacksqtspqubo{1}{1}
		\blacksqtsphobor{2}{1}
		
		\blacksqtsphobol{0}{0}
		
		\node (t0) at (-.4, \sca*5 + \sca/2) {$b_1$};
		\node (t1) at (-.4, \sca*4 + \sca/2) {$b_2$};
		\node (t2) at (-.4, \sca*3 + \sca/2) {$b_3$};
		\node (t3) at (-.4, \sca*2 + \sca/2) {$b_4$};
		\node (t4) at (-.4, \sca*1 + \sca/2) {$b_5$};
		\node (t5) at (-.4, \sca*0 + \sca/2) {$b_6$};
		
		\node (Time) at (-1.2, 2.8*\sca) {\rotatebox[]{90}{Time}};
		\end{tikzpicture}};
	
	\node at (6, -2) {\begin{tikzpicture}[scale=.8]   
		\foreach \x in {0,...,5}
		\foreach \y in {4,...,5} 
		\whitesqtspqubo{\x}{\y};

		\foreach \y in {0,...,3} 
		\whitesqtsphobol{0}{\y};
		
		\foreach \y in {0,...,3} 
		\whitesqtspqubo{1}{\y};
		
		\foreach \y in {0,...,3} 
		\whitesqtsphobor{2}{\y};

		\blacksqtspqubo{0}{5}
		\blacksqtspqubo{2}{4}

		\blacksqtsphobol{0}{3}
		\blacksqtsphobor{2}{3}
		
		\blacksqtsphobor{2}{2}
		
		\blacksqtspqubo{1}{1}
		\blacksqtsphobor{2}{1}
		
		\blacksqtsphobol{0}{0}
		
		\node (t0) at (-.4, \sca*5 + \sca/2) {$b_1$};
		\node (t1) at (-.4, \sca*4 + \sca/2) {$b_2$};
		\node (t2) at (-.4, \sca*3 + \sca/2) {$b_3$};
		\node (t3) at (-.4, \sca*2 + \sca/2) {$b_4$};
		\node (t4) at (-.4, \sca*1 + \sca/2) {$b_5$};
		\node (t5) at (-.4, \sca*0 + \sca/2) {$b_6$};
		
		\node (Time) at (-1.2, 2.8*\sca) {\rotatebox[]{90}{Time}};
		\end{tikzpicture}};
	
	\node at (10.5,-2) {\begin{tikzpicture}[scale=.8]   
		
		\foreach \y in {0,...,5} 
		\whitesqtspmixedl{0}{\y};
		
		\foreach \y in {0,...,5} 
		\whitesqtspmixedr{1}{\y};
		
		\foreach \y in {0,...,5} 
		\whitesqtspmixedl{2}{\y};
		
		\foreach \y in {0,...,5} 
		\whitesqtspmixedr{3}{\y};
		
				\blacksqtspmixedr{1}{5}
		
		\blacksqtspmixedr{1}{4}
		\blacksqtspmixedl{0}{4}
		
		\blacksqtspmixedl{2}{3}
		\blacksqtspmixedr{3}{3}
		
		\blacksqtspmixedl{0}{2}
		
		\blacksqtspmixedr{3}{1}
		
		\blacksqtspmixedl{2}{0}
		
		\node (t0) at (-.4, \sca*5 + \sca/2) {$b_1$};
		\node (t1) at (-.4, \sca*4 + \sca/2) {$b_2$};
		\node (t2) at (-.4, \sca*3 + \sca/2) {$b_3$};
		\node (t3) at (-.4, \sca*2 + \sca/2) {$b_4$};
		\node (t4) at (-.4, \sca*1 + \sca/2) {$b_5$};
		\node (t5) at (-.4, \sca*0 + \sca/2) {$b_6$};
		
		\node (Time) at (-1.2, 2.8*\sca) {\rotatebox[]{90}{Time}};
		\end{tikzpicture}};
	
	\node at (-2.5,2) {\bf \large a)};
	\node at (4.,3) {\bf \large b)};
	\node at (9.3,3) {\bf \large c)};
	\node at (4,-1) {\bf \large d)};
	\node at (9.,-1) {\bf \large e)};
	\end{tikzpicture}

	\caption{\label{fig:encodings} Visualization of QUBO encoding and encodings introduced in the paper for TSP problem. a) exemplary solution for TSP problem. On the right there are assignments of the exemplary solution using  respectively b) QUBO, c) HOBO d), na\"ive mixed, and e) mixed encodings.}
\end{figure*}

\begin{figure*}\centering 
\begin{tikzpicture}
\newcommand{\len}{3.5}
\newcommand{\hh}{0.17}
\newcommand{\length}{3.5}

\newcommand{\leng}{0.7} 
\newcommand{\height}{0.5} 
\newcommand{\eps}{0.06}

\newcommand{\mygate}[3]{  
\draw [fill = #3] (#1, #2-3*\hh/2) 
	rectangle (#1 + \leng, #2 + 3*\hh/2);
}

\newcommand{\vertline}[3]{ 
\draw [thick] (#1+\leng/2,#2-3*\hh/2) -- (#1+\leng/2,#3 + 3*\hh/2); 
}

\definecolor{color1}{rgb}{0.9, 0.75, 0.54}
\definecolor{color2}{rgb}{0.61, 0.87, 1.0}
\definecolor{color3}{rgb}{0.56, 0.74, 0.56}
\definecolor{color4}{rgb}{0.8, 0.6, 0.8}
\definecolor{color5}{rgb}{0.93, 0.91, 0.67}

\node at (0,2) {
\begin{tikzpicture}[transform shape,line width=0.2pt]

\pgfmathparse{-1*72+162}
	\node[draw,circle,inner sep=0.1cm, fill = color1] (N-1) at 
	(\pgfmathresult:1.1cm) [thick] {$\bf{b_1}$};
\pgfmathparse{-2*72+162}
	\node[draw,circle,inner sep=0.1cm, fill = color2] (N-2) at 
	(\pgfmathresult:1.1cm) [thick] {$\bf{b_2}$};
\pgfmathparse{-3*72+162}
	\node[draw,circle,inner sep=0.1cm, fill = color3] (N-3) at 
	(\pgfmathresult:1.1cm) [thick] {$\bf{b_3}$};
\pgfmathparse{-4*72+162}
	\node[draw,circle,inner sep=0.1cm, fill = color4] (N-4) at 
	(\pgfmathresult:1.1cm) [thick] {$\bf{b_4}$};
\pgfmathparse{-5*72+162}
	\node[draw,circle,inner sep=0.1cm, fill = color5] (N-5) at 
	(\pgfmathresult:1.1cm) [thick] {$\bf{b_5}$};	

\draw [ultra thick, <->] (N-1) -- (N-4);
\draw [ultra thick, <->] (N-2) -- (N-3);
\end{tikzpicture}
};

\node at (\len,2) {
\begin{tikzpicture}[transform shape,line width=0.2pt]

\pgfmathparse{(-1)*72+162-1*72}
	\node[draw,circle,inner sep=0.1cm, fill = color1] (N-1) at 
	(\pgfmathresult:1.1cm) [thick] {$\bf{b_1}$};
\pgfmathparse{(-2)*72+162-1*72}
	\node[draw,circle,inner sep=0.1cm, fill = color2] (N-2) at 
	(\pgfmathresult:1.1cm) [thick] {$\bf{b_2}$};
\pgfmathparse{(-3)*72+162-1*72}
	\node[draw,circle,inner sep=0.1cm, fill = color3] (N-3) at 
	(\pgfmathresult:1.1cm) [thick] {$\bf{b_3}$};
\pgfmathparse{(-4)*72+162-1*72}
	\node[draw,circle,inner sep=0.1cm, fill = color4] (N-4) at 
	(\pgfmathresult:1.1cm) [thick] {$\bf{b_4}$};
\pgfmathparse{(-5)*72+162-1*72}
	\node[draw,circle,inner sep=0.1cm, fill = color5] (N-5) at 
	(\pgfmathresult:1.1cm) [thick] {$\bf{b_5}$};	

\draw [ultra thick, <->] (N-1) -- (N-2);
\draw [ultra thick, <->] (N-5) -- (N-3);
\end{tikzpicture}
};

\node at (2*\len,2) {
\begin{tikzpicture}[transform shape,line width=0.2pt]

\pgfmathparse{(-1)*72+162-2*72}
	\node[draw,circle,inner sep=0.1cm, fill = color1] (N-1) at 
	(\pgfmathresult:1.1cm) [thick] {$\bf{b_1}$};
\pgfmathparse{(-2)*72+162-2*72}
	\node[draw,circle,inner sep=0.1cm, fill = color2] (N-2) at 
	(\pgfmathresult:1.1cm) [thick] {$\bf{b_2}$};
\pgfmathparse{(-3)*72+162-2*72}
	\node[draw,circle,inner sep=0.1cm, fill = color3] (N-3) at 
	(\pgfmathresult:1.1cm) [thick] {$\bf{b_3}$};
\pgfmathparse{(-4)*72+162-2*72}
	\node[draw,circle,inner sep=0.1cm, fill = color4] (N-4) at 
	(\pgfmathresult:1.1cm) [thick] {$\bf{b_4}$};
\pgfmathparse{(-5)*72+162-2*72}
	\node[draw,circle,inner sep=0.1cm, fill = color5] (N-5) at 
	(\pgfmathresult:1.1cm) [thick] {$\bf{b_5}$};	

\draw [ultra thick, <->] (N-4) -- (N-2);
\draw [ultra thick, <->] (N-5) -- (N-1);
\end{tikzpicture}
};

\node at (3*\len,2) {
\begin{tikzpicture}[transform shape,line width=0.2pt]
B
\pgfmathparse{(-1)*72+162-3*72}
	\node[draw,circle,inner sep=0.1cm, fill = color1] (N-1) at 
	(\pgfmathresult:1.1cm) [thick] {$\bf{b_1}$};
\pgfmathparse{(-2)*72+162-3*72}
	\node[draw,circle,inner sep=0.1cm, fill = color2] (N-2) at 
	(\pgfmathresult:1.1cm) [thick] {$\bf{b_2}$};
\pgfmathparse{(-3)*72+162-3*72}
	\node[draw,circle,inner sep=0.1cm, fill = color3] (N-3) at 
	(\pgfmathresult:1.1cm) [thick] {$\bf{b_3}$};
\pgfmathparse{(-4)*72+162-3*72}
	\node[draw,circle,inner sep=0.1cm, fill = color4] (N-4) at 
	(\pgfmathresult:1.1cm) [thick] {$\bf{b_4}$};
\pgfmathparse{(-5)*72+162-3*72}
	\node[draw,circle,inner sep=0.1cm, fill = color5] (N-5) at 
	(\pgfmathresult:1.1cm) [thick] {$\bf{b_5}$};	

\draw [ultra thick, <->] (N-3) -- (N-1);
\draw [ultra thick, <->] (N-4) -- (N-5);
\end{tikzpicture}
};

\node at (4*\len,2) {
\begin{tikzpicture}[transform shape,line width=0.2pt]

\pgfmathparse{(-1)*72+162-4*72}
	\node[draw,circle,inner sep=0.1cm, fill = color1] (N-1) at 
	(\pgfmathresult:1.1cm) [thick] {$\bf{b_1}$};
\pgfmathparse{(-2)*72+162-4*72}
	\node[draw,circle,inner sep=0.1cm, fill = color2] (N-2) at 
	(\pgfmathresult:1.1cm) [thick] {$\bf{b_2}$};
\pgfmathparse{(-3)*72+162-4*72}
	\node[draw,circle,inner sep=0.1cm, fill = color3] (N-3) at 
	(\pgfmathresult:1.1cm) [thick] {$\bf{b_3}$};
\pgfmathparse{(-4)*72+162-4*72}
	\node[draw,circle,inner sep=0.1cm, fill = color4] (N-4) at 
	(\pgfmathresult:1.1cm) [thick] {$\bf{b_4}$};
\pgfmathparse{(-5)*72+162-4*72}
	\node[draw,circle,inner sep=0.1cm, fill = color5] (N-5) at 
	(\pgfmathresult:1.1cm) [thick] {$\bf{b_5}$};	
	
\draw [ultra thick, <->] (N-2) -- (N-5);
\draw [ultra thick, <->] (N-3) -- (N-4);
\end{tikzpicture}
};

\draw (-0.35*\len,0) -- (4.4*\len, 0);
\draw (-0.35*\len,-\hh) -- (4.4*\len, -\hh);
\draw (-0.35*\len,-2*\hh) -- (4.4*\len, -2*\hh);

\draw (-0.35*\len,-4*\hh) -- (4.4*\len, -4*\hh);
\draw (-0.35*\len,-5*\hh) -- (4.4*\len, -5*\hh);
\draw (-0.35*\len,-6*\hh) -- (4.4*\len, -6*\hh);

\draw (-0.35*\len,-8*\hh) -- (4.4*\len, -8*\hh);
\draw (-0.35*\len,-9*\hh) -- (4.4*\len, -9*\hh);
\draw (-0.35*\len,-10*\hh) -- (4.4*\len, -10*\hh);

\draw (-0.35*\len,-12*\hh) -- (4.4*\len, -12*\hh);
\draw (-0.35*\len,-13*\hh) -- (4.4*\len, -13*\hh);
\draw (-0.35*\len,-14*\hh) -- (4.4*\len, -14*\hh);

\draw (-0.35*\len,-16*\hh) -- (4.4*\len, -16*\hh);
\draw (-0.35*\len,-17*\hh) -- (4.4*\len, -17*\hh);
\draw (-0.35*\len,-18*\hh) -- (4.4*\len, -18*\hh);

\mygate{-0.9}{-1*\hh}{color1};
\mygate{-0.9}{-13*\hh}{color4};
\mygate{0.3}{-5*\hh}{color2};
\mygate{0.3}{-9*\hh}{color3};
\vertline{-0.9}{-1*\hh}{-13*\hh}
\vertline{0.3}{-5*\hh}{-9*\hh}

\mygate{3.2}{-9*\hh}{color3};
\mygate{3.2}{-17*\hh}{color5};
\mygate{3.2}{-1*\hh}{color1};
\mygate{3.2}{-5*\hh}{color2};
\vertline{3.2}{-9*\hh}{-17*\hh}
\vertline{3.2}{-1*\hh}{-5*\hh}

\mygate{6.1}{-5*\hh}{color2};
\mygate{6.1}{-13*\hh}{color4};
\mygate{7.3}{-1*\hh}{color1};
\mygate{7.3}{-17*\hh}{color5};
\vertline{6.1}{-5*\hh}{-13*\hh}
\vertline{7.3}{-1*\hh}{-17*\hh}

\mygate{10.15}{-1*\hh}{color1};
\mygate{10.15}{-9*\hh}{color3};
\mygate{10.15}{-13*\hh}{color4};
\mygate{10.15}{-17*\hh}{color5};
\vertline{10.15}{-1*\hh}{-9*\hh}
\vertline{10.15}{-13*\hh}{-17*\hh}

\mygate{13.1}{-5*\hh}{color2};
\mygate{13.1}{-17*\hh}{color5};
\mygate{14.3}{-9*\hh}{color3};
\mygate{14.3}{-13*\hh}{color4};
\vertline{13.1}{-5*\hh}{-17*\hh}
\vertline{14.3}{-9*\hh}{-13*\hh}

\draw [decorate,decoration={brace,amplitude=3pt},xshift=2pt,yshift=0pt]
(-0.4*\len,-2*\hh-\eps) -- (-0.4*\len,0*\hh+\eps) node 
[black,midway,xshift=-0.35cm] {{\large$\bf{b_1}$}};
\draw [decorate,decoration={brace,amplitude=3pt},xshift=2pt,yshift=0pt]
(-0.4*\len,-6*\hh-\eps) -- (-0.4*\len,-4*\hh+\eps) node 
[black,midway,xshift=-0.35cm] {{\large$\bf{b_2}$}};
\draw [decorate,decoration={brace,amplitude=3pt},xshift=2pt,yshift=0pt]
(-0.4*\len,-10*\hh-\eps) -- (-0.4*\len,-8*\hh+\eps) node 
[black,midway,xshift=-0.35cm] {{\large$\bf{b_3}$}};
\draw [decorate,decoration={brace,amplitude=3pt},xshift=2pt,yshift=0pt]
(-0.4*\len,-14*\hh-\eps) -- (-0.4*\len,-12*\hh+\eps) node 
[black,midway,xshift=-0.35cm] {{\large$\bf{b_4}$}};
\draw [decorate,decoration={brace,amplitude=3pt},xshift=2pt,yshift=0pt]
(-0.4*\len,-18*\hh-\eps) -- (-0.4*\len,-16*\hh+\eps) node 
[black,midway,xshift=-0.35cm] {{\large$\bf{b_5}$}};
\end{tikzpicture}
\caption{\label{fig:schedule} Round-robin schedule \cite{rasmussen2008round} 
for binary vectors $b_1,\dots,b_5$. We assumed that each $b_i$ is defined over 
3 qubits. Each gate defined over a pair $b_i,b_j$ is an implementation of 
the Hamiltonian defined over these variables. Such Hamiltonian can be 
implemented using the technique visualized in 
Fig.~\ref{fig:decomposition-and-gray-code}.}
\end{figure*}

\section*{Results}

\paragraph{Preliminaries}

Traveling Salesman Problem is natively defined  over the permutations of
$\{1,\dots,N\}$. A simple encoding can be defined as follows. We make a 
partition of all bits into $N$ collections $b_t$, where each collection  
encodes a city visited in a particular time. Then, for each collection we 
choose a number encoding which represents the city.

QUBO is an example of such an encoding, where each city is represented by an 
one-hot vector, see Fig.~\ref{fig:encodings}. Instead, each city can be encoded 
as a number using binary numbering system. 
Using numbering system is a state-of-the-art way 
to encode inequalities \cite{lucas2014ising}, however it is new in the context 
of encoding elements of a feasible space. 

The Hamiltonian takes the form
\begin{equation}\label{eq:general-tsp-hobo}
\begin{split}
H(b) &= A_1\sum_{t=1}^N H_{\rm valid} (b_t) + A_2\sum_{t=1}^N \sum_{t'=t+1}^N H_{\neq}(b_t,b_{t'}) \\
&\phantom{\ =}+ B\sum_{\substack{i,j=1\\i\neq j}}^N W_{ij} \sum_{t=1}^N 
H_\delta (b_t,i)H_\delta (b_{t+1},j).
\end{split}
\end{equation}
Hamiltonian $H_{\rm valid}$ checks whether a vector of bits $b_t$ encodes a 
valid city. For example for QUBO it checks whether at most one bit is equal to 
1. 

Hamiltonian $H_{\neq}$ verifies whether two collections encode the same city. 
Note that QUBO encoding falls into this representation by choosing 
\begin{equation*}
H_{\neq}^{\rm QUBO}(b_t,b_{t'}) = \sum_{i\neq j} \left (2b_{ti}b_{t'j} +\frac{1}{\binom{N}{2}}(1-b_{ti} -b_{t'j})\right ).
\end{equation*}

Hamiltonian $H_\delta$ plays a similar role as $H_{\neq}$. If the inputs are 
different, then both $H_\delta$ and $H_{\neq}$ give zeros. If the inputs are 
the same, then the outputs are nonzero and moreover we expect that the 
Hamiltonian $H_{\neq}$ outputs $1$. 
This is in order to 
preserve the costs of routes. In case of QUBO we took 
$H_\delta(b_t,i) = b_{ti}$. Note that in particular $H_{\delta}=H_{\neq}$ may 
be a good choice, 
however later we will show that choosing different $H_{\neq}$ may be beneficial.

\begin{figure*}
\begin{tikzpicture}
\newcommand{\length}{0.75} 
\newcommand{\height}{0.6} 

\newcommand{\len}{1.5} 
\newcommand{\hh}{0.7} 

\newcommand{\lensec}{0.8} 

\newcommand{\step}{0.15} 
\newcommand{\eps}{0.1} 

\newcommand{\singlegate}[3]{
\draw [fill = blue!20] (#1,#2-\height/2) rectangle (#1 + \length,#2 + 
\height/2);
\node (t0) at (#1+\length/2, #2) {$#3$};
}

\newcommand{\doublegate}[4]{  
\draw [fill=blue!20] (#1,#2-\height/2) rectangle (#1 + \length,#3 + \height/2);
\node at (#1+\length/2, #2/2+#3/2) {$#4$};
}

\newcommand{\triplegate}[4]{  
\draw [fill = blue!20] (#1,#2-\height/2) rectangle (#1 + \length,#3 + 
\height/2);
\node at (#1+\length/2, #2/2 + #3/2) {$#4$};
}

\newcommand{\contr}[3]{  
\draw (#1,#2-0.12) -- (#1,#3);
\draw (#1,#2) circle (0.12);
\draw[fill] (#1,#3) circle (0.075);
}

\newcommand{\vertline}[3]{ 
\draw (#1+\length/2,#2+\height/2) -- (#1+\length/2,#3 + \height/2); 
}


\draw (-1*\step,7*\hh) -- (66*\step, 7*\hh);
\draw (-1*\step,6*\hh) -- (66*\step, 6*\hh);
\draw (-1*\step,5*\hh) -- (66*\step, 5*\hh);

\singlegate{3*\step}{5*\hh}{\alpha_3};
\doublegate{12*\step}{5*\hh}{6*\hh}{\alpha_{23}};
\singlegate{21*\step}{6*\hh}{\alpha_2};
\doublegate{30*\step}{6*\hh}{7*\hh}{\alpha_{12}};
\triplegate{39*\step}{5*\hh}{7*\hh}{\alpha_{123}};
\singlegate{48*\step}{5*\hh}{\alpha_{13}};
\vertline{48*\step}{5*\hh}{7*\hh};
\singlegate{48*\step}{7*\hh}{\alpha_{13}};
\singlegate{57*\step}{7*\hh}{\alpha_1};

\draw (-1*\step,0) -- (66*\step, 0);
\draw (-1*\step,\hh) -- (66*\step, \hh);
\draw (-1*\step,2*\hh) -- (66*\step, 2*\hh);
\draw (-1*\step,3*\hh) -- (66*\step, 3*\hh);

\contr{1*\step}{0*\hh}{1*\hh};
\singlegate{3*\step}{0}{\alpha_{3}};
\contr{10*\step}{0*\hh}{2*\hh};
\singlegate{12*\step}{0}{\alpha_{23}};
\contr{19*\step}{0*\hh}{1*\hh};
\singlegate{21*\step}{0}{\alpha_{2}};
\contr{28*\step}{0*\hh}{3*\hh};
\singlegate{30*\step}{0}{\alpha_{12}};

\contr{37*\step}{0*\hh}{1*\hh};
\singlegate{39*\step}{0}{\alpha_{123}};
\contr{46*\step}{0*\hh}{2*\hh};
\singlegate{48*\step}{0}{\alpha_{13}};
\contr{55*\step}{0*\hh}{1*\hh};
\singlegate{57*\step}{0}{\alpha_{1}};
\contr{64*\step}{0*\hh}{3*\hh};

\node at (-2.5*\step,0) {$\ket{0}$};
\node at (67.5*\step,0) {$\ket{0}$};

\draw [decorate,decoration={brace,amplitude=3pt},xshift=0pt,yshift=0pt]
(-2*\step,\hh-\eps) -- (-2*\step,3*\hh+\eps) node 
[black,midway,xshift=-0.4cm] {$\ket\varphi$};
\draw [decorate,decoration={brace,amplitude=3pt},xshift=0pt,yshift=0pt]
(-2*\step,5*\hh-\eps) -- (-2*\step,7*\hh+\eps) node 
[black,midway,xshift=-0.4cm] {$\ket\varphi$};


\node at (-12.5*\step,0) {$\ket{0}$};
\node at (-42*\step,0) {$\ket{0}$};

\newcommand{\hhl}{0.78} 

\draw (-40.5*\step,0) -- (-14.5*\step, 0);
\draw (-40.5*\step,1*\hh) -- (-14.5*\step, 1*\hh);
\draw (-40.5*\step,2*\hh) -- (-14.5*\step, 2*\hh);
\draw (-40.5*\step,3*\hh) -- (-14.5*\step, 3*\hh);

\contr{-17*\step}{0*\hh}{3*\hh};
\contr{-20*\step}{0*\hh}{2*\hh};
\contr{-23*\step}{0*\hh}{1*\hh};

\singlegate{-30*\step}{0}{\alpha};

\contr{-32*\step}{0*\hh}{1*\hh};
\contr{-35*\step}{0*\hh}{2*\hh};
\contr{-38*\step}{0*\hh}{3*\hh};

\draw [decorate,decoration={brace,amplitude=3pt},xshift=0pt,yshift=0pt]
(-41.5*\step,\hh-\eps) -- (-41.5*\step,3*\hh+\eps) node 
[black,midway,xshift=-0.4cm] {$\ket\psi$};

\newcommand{\hhltth}{7*\hh} 
\newcommand{\hhlttw}{6*\hh} 
\newcommand{\hhlto}{5*\hh} 

\draw (-40.5*\step,\hhlto) -- (-14.5*\step, \hhlto);
\draw (-40.5*\step,\hhlttw) -- (-14.5*\step, \hhlttw);
\draw (-40.5*\step,\hhltth) -- (-14.5*\step, \hhltth);

\contr{-17*\step}{\hhlttw}{\hhltth};
\contr{-21.5*\step}{\hhlto}{\hhlttw};

\singlegate{-30*\step}{\hhlto}{\alpha};

\contr{-33.5*\step}{\hhlto}{\hhlttw};
\contr{-38*\step}{\hhlttw}{\hhltth};

\draw [decorate,decoration={brace,amplitude=3pt},xshift=0pt,yshift=0pt]
(-41.5*\step,\hhlto-\eps) -- (-41.5*\step,\hhltth+\eps) node 
[black,midway,xshift=-0.4cm] {$\ket\psi$};

\node at (-6.5,5.5) {\textbf{(a)}};
\node at (-6.5,2.5) {\textbf{(b)}};
\node at (-0.5,5.5) {\textbf{(c)}};
\node at (-0.5,2.5) {\textbf{(d)}};

\end{tikzpicture}

\caption{\label{fig:decomposition-and-gray-code} On the left, two 
decompositions of $\exp(\ii t Z_1 Z_2 Z_3)$, a) without auxiliary qubit 
\cite{seeley2012bravyi} (p.30), and b) with auxiliary qubit 
\cite{nielsen2001quantum} (p.210). On the right, an example of simplifying 
circuit for $\sum_{I\subseteq \{1,2,3\}}\alpha_I \prod_{i\in I} b_i$ using Gray 
codes, c) using $Z$ operations only, and d) with CNOTs and single-qubit gates 
only. In all the figures blue gates are a $k$-local $Z$ operations.}
\end{figure*}

\paragraph{Simple HOBO encoding}
The simplest encoding is the one in which each collection $b_t$ encodes a  
city in a binary system, see Fig.~\ref{fig:encodings}. In this case, for each time we need 
$K\coloneqq \lceil \log N\rceil $ qubits. In total we need 
$\sim N\log(N)$ qubits, which match the lower bound. Moreover, we have to 
design $H_{\rm valid}$ in such a way that $b_t$ represents the number at 
most $N-1$.

Following Eq.~\eqref{eq:general-tsp-hobo}, it is easy to note that HOBO defined in a way described above, is of polynomial size. 
Note that the sum of Hamiltonians $H_{\neq}$ 
produces at most $\binom{N}{2}2^{2K}$ elements. Furthermore, the terms 
introduced by $H_{\rm valid}^{\rm HOBO}$ and $H_{\delta}^{\rm HOBO}$ are already present in $H_{\neq}^{\rm HOBO}$. 
Hence in total we have $\order{N^4}$ terms, which implies the polynomial size 
and  depth, and thus volume.  

Let us now present an exemplary encoding. Suppose $\tilde b_{K-1}\dots \tilde 
b_{0}$ is a binary representation of $N-1$. Suppose $k^0\in K_0$ are such 
indices that $\tilde b_{k^0} = 0$. Then one can show that 
\begin{equation*}
H_{\rm valid}^{\rm HOBO}(b_t) \coloneqq \sum_{k^0\in K_0} b_{t,k_0}\prod_{k=k^0+1}^{K-1} (1-(b_{t,k} - \tilde b_{k})^2)
\end{equation*}
validates whether the encoding number is at most $N-1$. A detailed proof can be 
found in Supplementary materials, here let us consider an example. Suppose 
$N-1=100101_2$. All the numbers larger than $N-1$ are of the form $11????_2$, 
$101???_2$ or $10011?_2$, where `$?$' denotes as arbitrary bit value. The 
polynomial
\begin{equation}
b_{t5}b_{t4} \  + \  b_{t5}(1-b_{t4})b_{t3} \ +\  
b_{t5}(1-b_{t4})(1-b_{t3})b_{t2}b_{t1}.
\end{equation}
punishes all these forms. At the same time, one can verify that numbers smaller 
than $N-1$ are not punished by the Hamiltonian.

Here, we will consider $H_{\neq}^{\rm HOBO} \equiv H_\delta^{\rm HOBO}$, hence 
it is enough to define the latter only. Hamiltonian $H_{\delta}^{\rm HOBO}$ can 
be defined as
\begin{equation*}
H_\delta^{\rm HOBO}(b,b') \coloneqq \prod_{k=1}^K (1- (b_k - b'_k)^2).
\end{equation*}
Note that if $b'$ is a fixed number like it is in the case of objective 
function implementation in Eq.~\eqref{eq:general-tsp-hobo}, then we simply take 
consecutive bits from binary representation.

Let us estimate the cost of this encoding. As it was previously stated, the 
number of factors is of order $\order{N^4}$. Using round-robin techniques and 
Gray code, see Fig.~\ref{fig:schedule} and 
\ref{fig:decomposition-and-gray-code} one can show 
that the depth of the circuit is $\order{N^3}$, which gives us the volume 
$\order{N^4 \log(N)}$. Note that the Gray code requires additional 
$\lfloor NK/2\rfloor$ qubits, which does not change the final result. Finally, 
in order to achieve a similar quality of energy 
measurements, we need $\order{N^2}$ measurements.

One can expect that higher-order binary optimization may lead to difficult 
landscapes, harder to investigate for optimization algorithm. We have 
investigated TSP encodings with $W\equiv 0$ and random $W$ matrices. The 
results are presented in Fig.~\ref{fig:numerics}. One can see that with the 
same number of Hamiltonians applied, the results are either similar or in favor 
for higher-order encodings. 

\begin{figure*}\centering
	\includegraphics[]{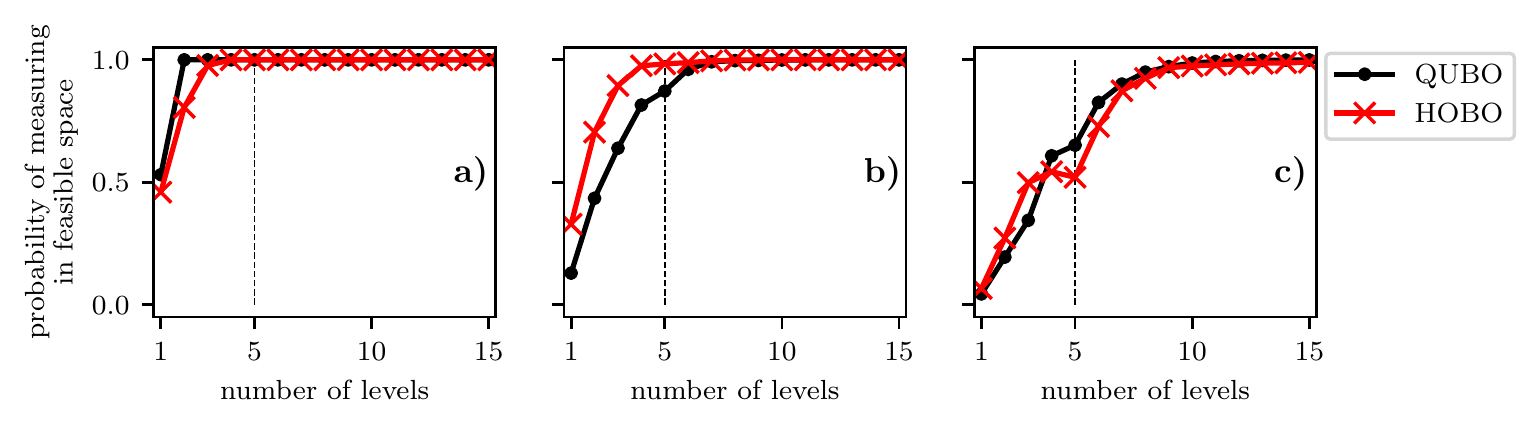}
	\includegraphics[]{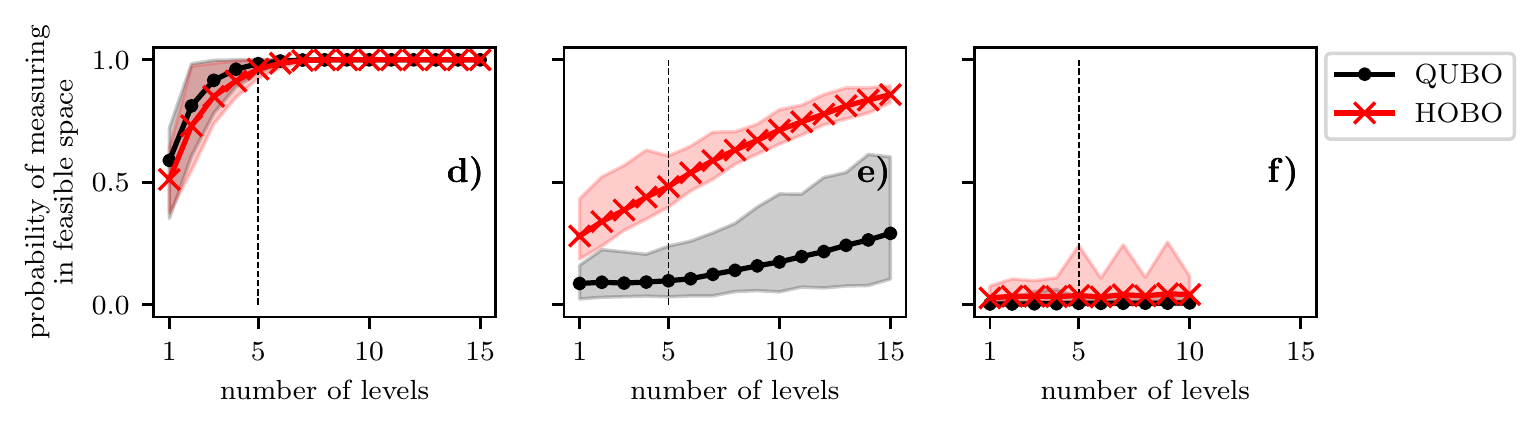}
	
	\caption{\label{fig:numerics} The dependence between the probability of 
	measuring the state in feasible space for a) 3 cities, b) 4 cities, c) 5 
	cities for $W\equiv 0$, and d) 3 cities, e) 4 cities, f) 5 cities for 
	randomly chosen $W$. For most cases HOBO and QUBO present similar quality, 
	while for e) HOBO clearly outperforms QUBO approach. Vertical line denotes 
	the change of optimization method. For f) we were only capable of 
	estimating up to 10 oracles applied due to convergence issues}
\end{figure*}

\paragraph{Mixed QUBO-HOBO approach}

While QUBO encoding requires significantly more qubits compared to HOBO, the 
latter produces much deeper circuits. It is not clear whether the number of 
qubits 
or the depth of the circuit will be more challenging, and in fact we claim that 
both may produce significant difficulties when designing quantum computers.  

One can consider a simple mixing of the proposed QUBO and HOBO approaches in 
the following way: let $R\in\{1,\dots, N\}$ be a free parameter of our model. 
Exactly $R$ of collections $b_t$ will be encoded as one-hot vectors (in QUBO's 
fashion), while the remaining $N-R$ collections will be encoded using the 
binary representation, see Fig. \ref{fig:encodings} d).  

Unfortunately, this approach combines flaws of both models introduced before. 
For $R=\Omega(N)$, the mixed approach requires $\Theta(N^2)$ qubits. On the 
other hand, for $R = \order{N}$ the depth of the circuit is the same as in the 
HOBO approach due to numerous HOBO-encoded $b_t$.

Instead, we propose another encoding. Suppose $N=(2^K-1)L$ for suitable 
integers $K$ and $L$. Each $b_t$ consists of $KL$ qubits of the form $b_{tlk}$. 
The cities are encoded as follows. First $K$ qubits 
(first bunch) decodes numbers $1,\dots,2^K-1$. The second bunch decodes 
$2^K,\dots, 2^K-2$, and so on. Note that QUBO and HOBO encodings introduced 
before are special instances with $L=N$ and $L=1$ respectively.

We add the following assumptions, which also define $H_{\rm valid}$. All bits 
being zero is an invalid assignment, which is equivalent to $\sum_{k,l} 
b_{tlk} \geq 1$. This can be forced by using standard techniques for 
transforming inequalities to QUBO \cite{lucas2014ising}. 
Secondly, if in some bunch there are nonzero bits, then in all other bunches 
bits have to be zeros. Note that this assumption is equivalent to the fact that for all 
$l_0=1,\dots,L$ we have that either $\sum_{k}b_{tl_0k}$ is zero or 
$\sum_{l\neq l_0}\sum_{k}b_{tlk}$ is zero. The Hamiltonian $H_{\rm valid}^{\rm MIX}$ 
takes the form
\begin{equation*}
\begin{split}
H_{\rm valid}^{\rm MIX}(b_t) &\coloneqq \left ( -\sum _{l=1}^L \sum_{k=0}^{K-1} b_{tlk} +1 + \sum_{i=0}^{\lceil\log (KL)\rceil }2^i \xi_{t,i} \right )^2 \\
& \phantom{\ =}+  \sum _{l=1}^L \left ( \sum_{k=0}^{K-1} b_{tlk} \right )\left ( \sum_{\substack{l'=1 \\ l'\neq l}}^L\sum_{k=0}^{K-1} b_{tl'k} \right ).
\end{split}
\end{equation*}
Here $\xi_i$ are additional bits for encoding the first assumption. In total, there will be additional $N \log(KL) \leq N\log(N)$ qubits.

Now let us define $H_{\neq }^{\rm MIX}$ Hamiltonian. Since due to $H_{\rm 
valid}^{\rm MIX}$ 
there exist two indices $l_0,l'_0$ such that $b_{tl_0}$ and $b_{t'l'_0}$ are 
nonzero, we only have to check  for consecutive bunches 
$l=1,\dots,L$ if there exists $l$ such that $b_{t,l}$ are nonzero and 
identical. The Hamiltonian $H_{\neq}^{\rm MIX}$ takes the form
\begin{equation*}
\begin{split}
H_{\neq}^{\rm MIX}(b_t,b_{t'}) &\coloneqq \sum_{l=1}^L \left( \sum_{k=0}^{K-1}(b_{tlk}+b_{t',l,k})\right) \times \\
& \phantom{\ \coloneqq}\times\prod_{k=1}^K (1- (b_{t',l,k} - b'_{t,l,k})^2).  
\end{split}
\end{equation*}
Note that the first factor checks whether the bunches are nonzero, while the 
latter is the Kronecker delta implementation as before.

Finally, let us define $H_{\delta}^{\rm MIX}$. Let $\bar l : \{1,\dots ,N\} 
\mapsto 
\{1,\dots,L\}$ be a function which outputs a bunch index denoting the 
$i$-th city. Then it is enough to apply the Kronecker delta on $\bar l(i)$-th 
bunch. Hence $H_{\delta}^{\rm MIX}$ will take the form
\begin{equation*}
H^{\rm MIX}_\delta(b_t,i) \coloneqq \prod_{k=1}^K (1- (b_{t',\bar l(i),k} - b^i_{k})^2),
\end{equation*}
where $b^i_k$ is $k$-th bit of binary representation of $i$.

Let us now calculate the efficiency of the encoding. We will consider only the 
scenario $k=\alpha \log(N)$ for $\alpha\in (0,1)$. First we note that 
$L=\frac{N}{2^K-1} = \Theta(N^{1-\alpha})$. For the proposed mixed encoding, we 
need $NKL + N\log(KL)= \Theta(N^{2-\alpha}\log(N))$ qubits. Hamiltonian can be 
encoded in a circuits of depth $\order{N^{1+2\alpha}}$. This finally gives us 
the volume $\order{N^{3-\alpha}\log(N)}$. All this parameters show a perfect, 
up to poly-log factors, transition between HOBO and QUBO approaches. Finally, 
to achieve constant error of energy estimation, we require $\order{N^{3-\alpha} 
\log N} \max_{i\neq j} W_{ij}$ runs of the circuit.

\paragraph{Optimal encoding}
So far we assumed that all binary variables are split into collections of 
binary variables, such that each collection defines a particular time point. We 
heavily used this assumption, so that the encoding was particularly simple. 
Theretofore it was implementable on a quantum computer, which is necessary 
for QAOA. Nevertheless, dropping this assumption can save us from even 
more qubits. 

Let $H$ be a diagonal Hamiltonian. Then $\bra{\psi}H\ket{\psi} = \sum_{b\in 
\{0,1\}^n} E_b|\braket{b}{\psi}|^2 $, where $E_i$ is the energy value 
corresponding to the bit string $i$. Hence, the statistics from the measurements are 
sufficient to estimate the energy.

Suppose we are given a general combinatorial optimization problem of function 
$f:\mathcal{A} \to \RR$, where $\mathcal{A}$ is a natural feasible space for 
the problem. In the case of TSP, $\mathcal{A}$ would be a collection of all 
permutations of some fixed order. Let $g:\mathcal{A}\to \mathcal B \subseteq 
\{0,1\}^n$ be a bijection function where $n=\lceil\log(|\mathcal A|) \rceil$.  
Let $\tilde g^{\rm inv}(b)$ be an extension of $g^{-1}$ such that it maps some 
penalty energy $E_{\rm pen} > \min_{a\in\mathcal{A}} f(a)$, i.e. $g^{\rm 
inv}(b)= g^{-1}(b)\delta_{b\in \mathcal B} + E_{\rm pen} \delta_{b\not\in 
\mathcal{B}}$. Then provided that $g^{\rm inv}$ can be efficiently computed, we 
can use it to estimate the expected energy directly from the measurement's statistics. Since converting binary representation into numbers takes negligible time, it is enough to provide a procedure for numbering elements of $\mathcal{A}$

We can incorporate this technique to TSP problem as well. In this case the 
simplest way is to use a factorial numbering system in which $i$-th 
digit starting from least significant can be any number between 0 and $i-1$. In 
general $(d_k\dots d_0)_{!} \equiv \sum_{i=0}^k d_0 \cdot i!$. The opposite 
transformation can be done by computing the modulo by consecutive natural 
numbers. Then such representation can be transformed to permutation via Lehmer 
codes, which starting with the most significant factoradic digit, take 
$(k+1)$-th digit of the sequence $(0,1,\dots,k)$. The used digit is removed and 
the procedure repeats for next digits. The taken digits in given order directly 
encodes a permutation.

Since the procedure described above maps consecutive natural numbers to routes, we require only $\lceil \log(N!)\rceil$ qubits, which is optimal for each $N$.
Since arbitrary pseudo-Boolean function can be transformed to pseudo-Boolean 
polynomial, it is as well the case for $f\circ g^{\rm inv}$. Hence there exists 
a diagonal Hamiltonian representing the same optimization problem. However, in 
general such encoding may require exponential number of terms, which makes it 
intractable for QAOA. Hence for such an approach VQE is at the moment 
the preferable quantum algorithm.  The numbers of required measurements will in 
general depend on the choice of $E_{\rm pen}$, however they can be equal to the 
length of any route, or $N\max_{i\neq j} W_{ij} $. By this we can show that the 
number of measurements is approximately $\order{N}\operatorname{range}(W)$, 
which is significantly smaller than for any encoding described before.

\section*{Discussion} The presented results show that it is possible to
significantly reduce the number of required qubits at the cost of having deeper circuits.
Since both the depth and the number of qubits are challenging tasks, we claim
that it is necessary to provide alternative representation allowing trade-offs
between the different measures. 
Our numerical results hint that the increase of the depth might not be that significant for larger system sizes, as one needs fewer levels in the space-efficient embedded version. Thus, it would be interesting to investigate how many fewer levels one needs in the space-efficient encoding scheme

Note that the approach cannot be applied for general problems. For example,
the state-of-the-art representation of the Max-Cut problem over a graph of order 
$N$ requires exactly $N$ qubits. Since the natural space in general is of 
order $2^N$, it seems unlikely to further reduce the number of qubits. However, 
one can expect similar improvements for other permutation-based problems like
max-$k$-coloring problem.

On top of that, while arbitrary HOBO can be turned into QUBO by automatic
quadratization techniques, it remains an open question whether there are simple
techniques which reduce the number of qubits at the cost of additional Pauli terms.
This is due to the fact that quadratization is well defined: if $H:\mathcal 
{B_X}\to \RR$ is 
a general pseudo-Boolean function, then quadratic pseudo-Boolean function
$H':\mathcal{B_X} \times \mathcal{B_Y}\to \RR$ is its quadratization iff for all
$x\in \mathcal{B_X}$
\begin{equation}
H(x) = \min _{y\in \mathcal{B_Y}} H'(x,y).
\end{equation}
Note that $y$ does not have any meaning in the context of original problem $H$.
However, when removing qubits from binary function, we may not be able to
reproduce the original solution. Thus, such (automatic) procedure requires
context of the problem being solved. 


\begin{table*}\centering 
	\setlength\extrarowheight{3pt}\small
	\begin{tabular}{l@{\qquad}c@{\qquad}c@{\qquad}c@{\qquad}c@{\ }}
		\hline
		& QUBO $H^{\rm QUBO}$ & HOBO $H^{\rm HOBO}$ & mixed $H^{\rm MIX}$ & enumerating $H^{\rm VQE}$ \\\hline 
		No. of qubits & $N^2$ & $N  \log(N)$ & $\frac{\alpha}{C}N^{2-\alpha}\log N$ & $N\lceil \log(N!) \rceil$\\
		No. of terms & $2N^3$ & $\frac{1}{2}N^4$ &  $\frac{C}{2}N^{3+\alpha} $ & exponential\\
		circuit depth & $12N$ & $2N^3$ & $2CN^{1+2\alpha}$ & exponential \\
		circuit volume  & $ 12N^3$  & $2N^4\log N$ & $ 2\alpha N^{3+\alpha }\log N$ & exponential\\
		No. of measurements & $\order{N^3}\max_{i\neq j}W_{ij}$ & $\order{N^2}\max_{i\neq j}W_{ij}$ & $\order{N^{3-\alpha}\log N}\max_{i \neq j}W_{ij}$ & $\order{N}\operatorname{range}(W)$\\\hline
	\end{tabular}
	\caption{Resources required for various Hamiltonian encodings. Only leading 
	terms are written, for more exact bounds see Supplementary materials.  The 
	$\order{\cdot}$ does not depend on the choice of $W$. We assumed $B=1$ and 
	$A_1,A_2\leq =\order{\max_{i\neq j}W_{ij}}$. Note that $H^{\rm MIX}$ scales 
	up to logarithmic factor between $H^{\rm QUBO}$ and $H^{\rm HOBO}$. For 
	mixed Hamiltonian the constant $C$ satisfies $C\in(\frac{1}{2},1)$, and in 
	general depends on $\alpha$ and $N$.}
\end{table*}


\section*{Methods}

\paragraph{The analysis of circuits' depths}
Let us begin with HOBO and mixed approaches. According to 
Eq.~\eqref{eq:general-tsp-hobo} we can split all the terms into those defined 
over pairs $(b_t,b_{t'})$ for $t\neq t'$. For pairwise different 
$t_0,t_1,t_2,t_3$, 
if we have polynomials defined over $b_{t_0},b_{t_1}$, and $b_{t_2},b_{t_3}$, 
then we can implement them independently. Using round-robin schedule, we can 
implement those $\binom{N}{2}$ polynomials in $N-1$ ($N$) steps for even (odd) 
$N$, as it is described in Fig.~\ref{fig:schedule}. 

Let $ H$ be a general Hamiltonian defined over $K$ bits. If we implemented each 
term independently, then it would require $\sum_{k=1}^K 
2i\binom{K}{i}\Theta(2^KK) = 2^KK-1$ controlled NOTs according to the 
decomposition presented in Fig.~\ref{fig:decomposition-and-gray-code}a). Adding 
single auxiliary qubit and using the decomposition from 
Fig.~\ref{fig:decomposition-and-gray-code}b), and ordering terms according to 
Gray code, we can do it using $2^K$ qubits. Following the reasoning from 
previous paragraph we can apply only  $\lceil\frac{N}{2}\rceil$ Hamiltonians at 
once. We have an additional cost of $\lceil\frac{N}{2}\rceil$ qubits, however 
reducing the  depth cost by $K = \Theta (\log n)$ for both mixed and simple 
HOBO approaches 

As far as QUBO is concerned, we have to make separate analysis, since only 
2-local terms are present. Note that 1-local terms  $Z_{ti}$ can be implemented 
with circuit of depth 1. Moreover, for each $1\leq t \leq N$, terms 
$\{Z_{ti}Z_{tj}:1\leq i<j\leq N \}$ can be implemented with a circuit of depth 
$\approx N$ using round-robin schedule. We can similarly implement 
$\{Z_{ti}Z_{t'i}:1\leq t<t'\leq N \}$, which implement first two addends of 
$H^{\rm QUBO}$. For the last addend we have to implement for each $1\leq t \leq 
N$ terms $\mathcal Z = \{Z_{ti}Z_{t+1,j}: i \neq j \}$. First note that we can 
first implement them for even $t$, then for odd $t$, which doubles the depth of 
the circuit for single $t$. Finally, note that $Z = \bigcup_{k=0}^{n-1}\mathcal 
Z_k$, where each $ \mathcal Z_k= \{Z_{ti}Z_{t+1,i+k}|1\leq i \leq N\}$ can be 
implemented with circuit of depth 1. Eventually, the depth of Hamiltonian 
$H^{\rm QUBO}$ is of order $\Theta(N)$.

The detailed analysis for each encoding can be found in Supplementary materials.

\paragraph{Numerical analysis}

In order enable the simple reproduction of our results, we publish our code on GitHub \cite{adam_2020_4020610}. Below we present a detailed explanation of the optimization procedure. 

Let us describe the optimization algorithm we used to generate the result presented in Fig.~\ref{fig:numerics}. We have emulated the quantum evolution and take an exact expectation energy of the state during the optimization.
As a classical subroutine we used a L-BFGS algorithm implemented in Julia's Optim package \cite{Optim.jl-2018}. 
Independently of the instance of algorithm, we assume that the parameters 
$\theta_{\rm mix}$ for mixing Hamiltonian could be from the interval $[0,\pi]$. 
For objective Hamiltonian we assumed the parameter $\theta_{\rm obj}$ will be 
from $[0, R]$. For both we assume periodic domain, mainly if $\pi+\varepsilon$ 
($R+\varepsilon$) was encountered, the parameter was changed to $\varepsilon$, 
which changed the hypercube domain to hypertorus.

Let $r$ be the level numbers of the circuit. For $r< 5$, each run was started from randomly chosen vector within range of the parameters.
 For $r\geq 5$, we used a trajectories method similar to the one proposed in \cite{zhou2019quantum}. First we optimized the algorithm for $r=5$, as described in previous paragraph. Then, for each $r\geq 5$ we took the optimized parameter vectors $\vec \theta^{(r)}_{
\rm mix}$, $\vec \theta^{(r)}_{\rm obj}$ of length $r$, and constructed new vectors $\vec \theta^{(r+1)}_{\rm mix}$, $\vec \theta^{(r+1)}_{\rm obj}$ of length $r+1$ by copying the last element at the end. We took these vectors as initial points for $r+1$. Therefore we obtained a trajectory of length $11$ (6 for figure ~\ref{fig:numerics}f)) of locally optimal parameter vectors, one for each $r\geq 5$. 

Sometimes the algorithm has not converged to the local optimum in reasonable 
time. We claim that the reason came from periodicity of the domain, which for 
general TSP breaks the smoothness of the Hamiltonian landscape. We only 
accepted runs for which: for $r < 5$ the gradient was below $10^{-5}$; for 
$r\geq 5$, for each parameters vector from the trajectory, the gradient was  
supposed to be below $10^{-5}$. 

Since the energy values for both QUBO and HOBO are incomparable, we decided to 
present the probability of measuring the feasible solution, i.e. the solution 
describing a valid route.

Figures a,b,c from Figure~\ref{fig:numerics} were generated as follows. We took 
QUBO and HOBO encodings of TSP with $W\equiv 0$ for QAOA algorithm. One can 
consider it as a Hamiltonian problem on a complete graph. We took $A_1=A_2=1$ 
for both encodings, and $R=\pi$ ($R=2\pi$) for QUBO (HOBO). For each 
$r=1,\dots,15$ we generated 100 locally optimal parameter vectors, and for each 
$r$ we chose the maximum probability.

Figures d,e,f from Figure~\ref{fig:numerics} were generated as follows. We generated 100 matrices $W$ to be $W=(X+X^\top)$, where $X$ is a random matrix with i.i.d. elements from the uniform distribution over $[0,1]$. For each TSP instance we repeated the procedure as in $W\equiv 0$ case, however generating 40 samples for each $r$. For each TSP instance we chose the maximal probability of measuring the state in the feasible space. The line describes the mean of the best probabilities over all TSP instances. The area describes the range between the worst and the best cases of the best probabilities over all TSP instances.

\paragraph{The number of measurements}
For estimating the number of measurements we applied the Hoeffiding's inequality. Let $\bar X = \frac{1}{M}\sum_{i=1}^M X_i$ be the mean of i.i.d. random variables such that $X_i\in [a,b]$. Then 
\begin{equation}
\mathbb{P} (|\bar X - \mathbb E\bar X |  \geq t ) \leq 2\exp\left (- \frac{2M t^2}{b-a} \right ).
\end{equation}
In our case $\bar X$ is the energy estimation of the energy samples $X_i$. 
Provided that we expect both probability error and estimation error to be 
constant, we require $M = \Omega(b-a)$. 

The values of $a,b$ depend not only on the cost matrix $W$, but also on the 
form of the encoding and the values of constants $A_1,A_2,B$ in 
Eq.~\eqref{eq:general-tsp-hobo}. For simplicity, we take the following 
assumptions when estimating the samples number. First, w.l.o.g. we assume 
$B=1$. Furthermore, we assumed $C\max_{i\neq j} W_{ij}\leq A_i \leq 
C'\max_{i\neq j} W_{ij}$ where $C,C'$ do not depend on $N$ and $W$. This 
matches the minimal requirement for QUBO. While various measures on $W$ could 
be considered, we presented the results in the form $f(N) \max_{W_{ij}}(W)$, 
where $f$ does not depend on $W$. Furthermore, our analysis for each model is 
tight in $N$ assuming that $\max_{i\neq j}W_{ij},\min_{i\neq j}W_{ij} = 
\Theta(1)$ independently on $N$. Note that using this assumption $a\geq  
N\min_{i\neq j}W_{ij}$ is valid for any correctly chosen $A_1,A_2$.

Furthermore, Hamiltonians $H_{\neq}$ and $H_{\rm valid}$ are integer-valued, 
and the spectral gap is of constant order. For QUBO, the spectral gap is at 
most two, which can be generated by adding superfluous one-bit to any valid 
encoding. For HOBO, it can be generated by choosing the same number for 
different $b_t, b_{t'}$. Finally, for the mixed approach we can generate
incorrect assignments to slack $\xi_{t,i}$ variables. 
Theoretically, there is no upper-bound for $A_1,A_2$. However, in general it is 
not encouraged to make them too large, as classical optimization algorithm may 
focus too much on pushing the quantum state to feasible state instead of 
optimizing over feasible space. For this reason and to make the presentation of 
our results simpler, we assumed that $A_i$ are of order $\max_{i\neq j} 
W_{i,j}$.

During the optimization, one could expect that the quantum states will finally have large, expectedly $1-o(1)$ overlap with the feasible space. Thus one could expect that the estimated energy will be of typical, and later close to minimal route. Thus, for these points one could expect that $\order{N}\operatorname{range}(W)$ samples would be enough to estimate the energy accurately. We agree that it is a valid approach, especially when the gradient is calculated using $(f(\theta_0 +\varepsilon)-f(\theta_0 -\varepsilon))/(2\varepsilon)$ formula. However, recently a huge and justified effort has been made on analytical gradient estimation, which is calculated based on $\theta$ parameters that are far from the current $\theta_0$ point \cite{schuld2019evaluating,sweke2019stochastic}. In this scenario, we can no longer assume that the energy will be of the order of the typical route. Thus we believe that our approach for number of measurements estimation is justified.

The detailed analysis for each encoding can be found in the Supplementary Materials.


\paragraph{Acknowledgements}

AG has been partially supported by National Science
Center under grant agreement 2019/32/T/ST6/00158 and 2019/33/B/ST6/02011. AG would also
like to acknowledge START scholarship from the Foundation for Polish
Science. 
AK acknowledges financial support by the Foundation for Polish Science 
through TEAM-NET project (contract no. POIR.04.04.00-00-17C1/18-00).
ZZ acknowledges support from the Janos Bolyai Research Scholarship, the NKFIH Grants No. K124152, K124176, KH129601, K120569 and the Hungarian Quantum Technology National Excellence Program Project No. 2017-1.2.1-NKP-2017-00001.
This research was supported in part by PL-Grid Infrastructure.

\bibliographystyle{ieeetr}
\bibliography{hobo_tsp_vqe_notes}
\clearpage
\appendix
\onecolumn 

\section{Supplementary analysis}

\subsection{Model resources analysis}
In this subsection we will provide a detailed analysis of resources required for each model. 

\subsubsection{QUBO}
Let us recall that the QUBO model takes the form
\begin{equation}
H^{\rm QUBO}(b) = A_1 \sum_{t=1}^N \left( 1 -\sum_{i=1}^N b_{ti} \right)^2 + A_2 \sum_{i=1}^N \left( 1 -\sum_{t=1}^N b_{ti} \right)^2 + B\sum_{\substack{i,j=1\\i\neq j}}^N W_{ij} \sum_{t=1}^N b_{ti}b_{t+1,j} \label{eq_sup:qubo}
\end{equation}
\paragraph{Number of qubits} The model requires  $N^2$ qubits. 

\paragraph{Number of terms} The number of terms can be determined as follows. First we note that we have $N^2$ 1-local terms. Secondly, from the first addend for each $t$ we have $\binom{N}{2}$ 2-local terms, similarly for the second. Finally, for the last part for each $i\neq j$ we have additional $N$ 2-local terms. Note that each 2-local term is present only in one part, which makes our calculation tight. Finally we have
\begin{equation}
\#{\rm terms} = N^2 + 1 + 2N\binom{N}{2} + NN(N-1) = 2N^3 - N^2 + 1.
\end{equation}

\paragraph{Depth of the circuit} Following the reasoning presented in the Method section, we can conclude that the 1-local terms can be implemented with the circuit of depth 1. The first addend from Eq.~\eqref{eq_sup:qubo} can be implemented with the circuit of depth $N$ for even $N$, and $N-1$ for odd $N$, counting $Z_iZ_j$ gates. 
The second addend can be studied analogously. 

For last addend we can independently consider parts 
$\sum_{i\neq j} W_{ij} b_{ti}b_{t+1,j}$ for even $t$, and then for odd $t$, which will double the circuit depth comparing to fixed $t$. Let us fix $t$. We can implement terms $ \mathcal Z_k= \{Z_{t,i}Z_{t+1,i+k}|1\leq i \leq N\}$ with circuit of depth 1.  Since $1\leq k \leq N$, we can simulate the last addend with the circuit of depth $2N$. Thus in total our circuit has depth $4N+1$

Since we have $N^2$ qubits, we can simulate at most $N^2/2$ 2-local terms 
independently. We have $\sim 2N^3$ terms and $N^2$ qubits. The circuit depth 
for simulating 2-local terms is $\sim 2N^3 / (N^2/2)=4N$, which shows that our 
analysis is tight.

The calculations were done in terms of gates $\exp(-\ii t Z)$ and $\exp(-\ii t 
Z_iZ_j)$. Since the latter requires 2 CNOTs and a single rotation gate, we have 
to triple the circuit depth implementing 2-local gates, which finally gives us 
$12N+1$.

\paragraph{Number of measurements} For the sake of simplicity we assume that $A_1,A_2\leq C\max_{i\neq j}W_{ij}$ and $B=1$. 
Note that for each $t$, the expression $\left(1-\sum_{i=1}^Nb_{ti}\right)^2$ 
can be bounded from above by $(N-1)^2$. We can similarly upperbound the 
next addend. For the last part we have
\begin{equation}
\sum_{\substack{i,j=1\\i\neq j}}^N W_{ij} \sum_{t=1}^N b_{ti}b_{t+1,j} \leq N\sum_{\substack{i,j=1\\i\neq j}}^N W_{ij} \leq N\binom{N}{2} \max_{i\neq j}W_{ij}
\end{equation}
Finally we have
\begin{equation}
\begin{split}
H^{\rm QUBO}(b) &\leq A_1 N(N-1)^2 + A_2 N(N-1)^2 + BN\binom{N}{2} \max_{i\neq j}W_{ij} \\
&\leq C N^3 \max_{i\neq j}W_{ij} + CN^3\max_{i\neq j}W_{ij} + N^3 \max_{i\neq j}W_{ij}\\
&= C' N^3 \max_{i\neq j}W_{ij}
\end{split}
\end{equation}
Note that the results is tight in order of $N$, which can be shown using $b_{ti}\equiv 1$ assignment.

\subsubsection{HOBO}
Let us recall that the model takes the form
\begin{equation}\label{eq_sup:hobo}
\begin{split}
H^{\rm HOBO}(b) &= A_1\sum_{t=1}^N H^{\rm HOBO}_{\rm valid} (b_t) + A_2\sum_{t=1}^N \sum_{t'=t+1}^N H^{\rm HOBO}_{\neq}(b_t,b_{t'}) \\
&\phantom{\ =}+ B\sum_{\substack{i,j=1\\i\neq j}}^N W_{ij} \sum_{t=1}^N H^{\rm 
HOBO}_\delta (b_t,i)H^{\rm HOBO}_\delta (b_{t+1},j).
\end{split}
\end{equation}
Provided that $\tilde b_{K-1}\dots \tilde b_{0}$ is a binary representation of 
$N-1$, we define
\begin{gather}
H_{\rm valid}^{\rm HOBO}(b_t) \coloneqq \sum_{k^0\in K_0} b_{t,k_0}\prod_{k=k^0+1}^{K-1} (1-(b_{t,k} - \tilde b_{k})^2)\\
H_{\neq }^{\rm HOBO}(b,b')\coloneqq H^{\rm HOBO}_\delta(b,b') \coloneqq 
\prod_{k=1}^K (1- 
(b_k - b'_k)^2)
\end{gather}
where $k^0 \in K_0$ are such indices that $\tilde{b}_{k^0} = 0$.

The proof that $H_{\rm valid}^{\rm HOBO}$ is a valid Hamiltonian for this encoding is presented in Sec.~\ref{sec_sup:h_valid_hobo}. For the sake of convenience, we will assume $K=\lceil\log(N) \rceil$, which is at the same time the number of bits in $b_t$ for any $t$.

\paragraph{Number of qubits} The required number of qubits is $N K +\frac{N}{2} 
\sim N\log(N)$. The $\frac{N}{2}$ part comes from the Gray code technique 
presented in the main text.

\paragraph{Number of terms} We will assume that $H_{\neq }^{\rm HOBO }(b_t,b_{t'})$ is a Hamiltonian consisting of all terms, meaning that for any factor of this Hamiltonian, the corresponding $\alpha_S$ is non-zero. While theoretically some terms may vanish, our investigations showed that at least for $N=3,4,5$ it is not the case. Furthermore, this guarantees that the bound will be valid for the number of terms in corresponding Ising model.

Due to this assumption, we do not need to consider $H_{\rm valid}^{\rm HOBO}$ 
anymore, since terms in $H_{\rm valid}^{\rm HOBO}(b_t)$ are present in  
$H_{\neq }^{\rm HOBO}(b_{t},b_{t+1})$ as well. For the same reason we do not 
have to consider elements from $H^{\rm HOBO}_\delta (b_t,i)H^{\rm HOBO}_\delta 
(b_{t+1},j)$.

Note that considering for example $H_{\neq }^{\rm HOBO}(b_{t},b_{t+1})$ and $H_{\neq }^{\rm HOBO}(b_{t+1},b_{t+2})$ we doubly count the terms defined over $b_{t}$ only. In general, each term will be counted $N-1$-times. Taking all of this into account we obtain
\begin{equation}
\begin{split}
\# {\rm terms} &= \binom{N}{2} 2^{2K} - (N-2)N2^K \leq \frac{1}{2}N^22^{2\log N} - N (N-2)2^{\log N-1} = \frac{1}{2}N^4 - \frac{1}{2}N^2(N-2) = \frac{1}{2}N^4 - \frac{1}{2}N^3 + N^2. 
\end{split}
\end{equation}

\paragraph{Depth of the circuit} Similarly as it was done while counting terms, we will only consider implementing $H_{\neq }^{\rm HOBO}$, as by commutativity $\alpha Z + \beta Z = (\alpha + \beta)Z$. 

Following the round-robin schedule we can implement $H_{\neq }^{\rm HOBO}$ in pairs in at most $N$ steps. Following the Gray code decomposition presented in the main text, we can implement each $H_{\neq}^{\rm HOBO}$ using a circuit of depth $2\cdot2^{2K}-1$. This gives us a final depth
\begin{equation}
{\rm depth} \leq N 2\cdot2^{2K}-1 \leq 2N \cdot 2^{2\log N} -1  = 2N^3 - 1.
\end{equation}

Note that factors defined over single $b_t$ will be implemented several times. This conflict can be solved by applying the corresponding angle only once, and for the rest of occurrences applying 0-angle rotation. Since we are not applying rotation in that case, we can reduce the circuit by up to $(N-1)2^{K} \approx N^2$, which does not have significant impact on the formula derived above.

It is complicate to reasonable lower bound the circuit's depth. Since most of 
the factors are of order $\log N$, one could consider that applying each term 
requires the depth of the same order as well. However using Gray code ordering 
it is clear that only two quits may be needed for applying higher-local terms. 
For this reason we will assume that at each step only two qubits are required 
to implement each term. This gives us the lower bound $\approx\frac{1}{2}N^4 / 
(N \log(N)) = \frac{1}{2}N^3/\log N$ which shows that our approach is tight up 
to $\log(N)$ factor.

\paragraph{Number of measurements}
For simplicity, we will assume that $A_1,A_2\leq C\max_{i\neq j}W_{ij}$ and 
$B=1$. Note that $H_{\rm valid}^{\rm HOBO}$ is a sum of at most $K-1$ elements, 
each giving the value either 0 or 1. Hence, for each $t$ we have $H_{\rm 
valid}^{\rm HOBO}(b_t)\leq K-1$.

Note that $H_{\neq}^{\rm HOBO} \equiv H_{\delta}^{\rm HOBO}$ and 
$H_{\delta}^{\rm HOBO}(\cdot, \cdot )\in \{0,1\}$. 
Furthermore, since $b_t$ can decode a single number only, $H_{\delta}(b_t,i)=1$ 
only for a single $i$. Thus
\begin{equation}
\begin{split}
\sum_{\substack{i,j=1\\i\neq j}}^N W_{ij} \sum_{t=1}^N H^{\rm HOBO}_\delta (b_t,i)H^{\rm HOBO}_\delta (b_{t+1},j) &= \sum_{t=1}^N \sum_{\substack{i,j=1\\i\neq j}}^N W_{ij} H^{\rm HOBO}_\delta (b_t,i)H^{\rm HOBO}_\delta (b_{t+1},j) \\
&= \sum_{t=1}^N W_{b_t,b_{t+1}} \leq N \max_{i\neq j} W_{ij}.
\end{split}
\end{equation}
and we can upper bound the energy by 
\begin{equation}
\begin{split}
H^{\rm HOBO}(b) &\leq A_1 N (K-1) + A_2 \binom{N}{2} + BN \max_{i\neq j} W_{ij}\\
&\leq C N \log N \max_{i\neq j} W_{ij} + C \frac{N^2}{2} \max_{i\neq j} W_{ij} 
+ N \max_{i\neq j} W_{ij} \\
&\leq C' N^2 \max_{i\neq j} W_{ij}.
\end{split}
\end{equation}
One can again shown the tightness of the bound by using $b_{ti} \equiv 1$.

\subsubsection{Mixed approach}
The Hamiltonian takes the form
\begin{equation}
\begin{split}
H^{\rm MIX}(b) &= A_1\sum_{t=1}^N H^{\rm MIX}_{\rm valid} (b_t; \xi_t) + A_2\sum_{t=1}^N \sum_{t'=t+1}^N H_{\neq}^{\rm MIX}(b_t,b_{t'}) \\
&\phantom{\ =}+ B\sum_{\substack{i,j=1\\i\neq j}}^N W_{ij} \sum_{t=1}^N H^{\rm MIX}_\delta (b_t,i)H^{\rm MIX}_\delta (b_{t+1},j),
\end{split}
\end{equation}
where $\xi_t$ are slack variables required to implement $H_{\rm valid}^{\rm MIX}$ and
\begin{gather}
H_{\rm valid}^{\rm MIX}(b_t) \coloneqq \left ( -\sum _{l=1}^L \sum_{k=0}^{K-1} b_{tlk} +1 + \sum_{i=0}^{\lceil\log (KL)\rceil }2^i \xi_{t,i} \right )^2 +  \sum _{l=1}^L \left ( \sum_{k=0}^{K-1} b_{tlk} \right )\left ( \sum_{\substack{l'=1 \\ l'\neq l}}^L\sum_{k=0}^{K-1} b_{tlk} \right )\\
H_{\neq}^{\rm MIX}(b_t,b_{t'}) \coloneqq \sum_{l=1}^L \left( \sum_{k=0}^{K-1}(b_{tlk}+b_{t',l,k})\right)\prod_{k=1}^K (1- (b_{t',l,k} - b'_{t,l,k})^2)  \\
H_\delta(b_t,i)^{\rm MIX} \coloneqq \prod_{k=1}^K (1- (b_{t',\bar l(i),k} - b^i_{k})^2).
\end{gather}

For a general choice of $\alpha\in (0,1)$ it is hardly possible that $\alpha 
\log N$ will be an integer number. Hence for fixed $\alpha$ let $K\coloneqq 
\lfloor \alpha \log N\rfloor$. Note that $K\sim \alpha \log N$. On the other 
hand, we will encounter elements of the form $2^K$ and $2^{2K}$, for which such 
an equivalence is not always valid, as
\begin{equation}
2^{K} = 2^{\lfloor \alpha \log N\rfloor} = 2^{\alpha \log N - \varepsilon_\alpha(N)} = C_\alpha(N)N^\alpha,
\end{equation}
where $C_\alpha(N) \coloneqq 2^{- \varepsilon_\alpha(N)}$ depends on the choice of $\alpha$ and $N$, but always $\frac{1}{2} \leq C_\alpha(N) \leq 1$. Similarly
\begin{equation}
2^{2K}=2^{2\lfloor \alpha \log N\rfloor} = 2^{2\alpha \log N - 2\varepsilon_\alpha(N)} = C_\alpha^2(N)N^{2\alpha}.
\end{equation}

Furthermore
\begin{equation}
L \coloneqq \left \lceil \frac{N}{2^K-1} \right\rceil  \sim
\left \lceil \frac{N}{C_\alpha(N)N^\alpha} \right\rceil \sim\frac{1}{C_\alpha(N)}N^{1-\alpha}
\end{equation} 
Note that if $N\neq (2^{K}-1)L$, then we have to add a separate Hamiltonian of 
the form similar to $H_{\rm valid}^{\rm HOBO}$, as this encoding will not a 
encode valid city. This does not change the estimations derived in next 
paragraphs, as 
\begin{itemize}
	\item it does not require additional qubits,
	\item it does not produce new terms (they are already included in 
	$H_{\neq}^{\rm HOBO})$, and by this it does not change the depth of the 
	circuit,
	\item it has negligible impact on the energy upperbound, since for each $t$ the mentioned Hamiltonian will increase energy by at most $K$. 
\end{itemize}


\paragraph{Number of qubits} The Hamiltonian requires 
\begin{equation}
\begin{split}
	NKL + \left \lfloor \frac{N}{2}\right \rfloor L + N \lceil \log(KL) \rceil &\sim \frac{\alpha}{C_\alpha(N)} N N^{1-\alpha} \log N + \frac{1}{2C_\alpha(N)}N N^{1-\alpha} +N \left (1+\log ( \frac{\alpha}{C_\alpha(N)} N^{1-\alpha} \log N)\right)\\
	& = \frac{\alpha}{C_\alpha(N)} N^{2-\alpha}\log N + \frac{1}{2C_\alpha(N)}N^{2-\alpha}  + N \operatorname{poly}(\log(N))
\end{split}
\end{equation}
qubits.
The $\left \lfloor \frac{N}{2}\right \rfloor L$ is required for implementing the scheduling, while $N \lceil \log(KL) \rceil$ qubits are needed for $\xi$ variables.

\paragraph{Number of terms}
	Let us start by calculating the terms generated from $H_{\neq }^{\rm MIX}$. 
	Let us fix a pair of different time points $t,t'$. The Hamiltonian consists 
	of $L$ independent Hamiltonians defined over $b_{t,l},b_{t,l'}$, which 
	consist of $2^{2K}$ elements. As it was in the case of $H_{\neq}^{\rm 
	MIX}$, we were over-counting terms defined over $b_{t,l}$ only. Taking this 
	into account we have
	\begin{equation}
	\binom{N}{2}L2^{2K} - NL(N-2)2^K \sim \frac{C_\alpha(N)}{2}N^2 N^{1-\alpha} N^{2\alpha} - N^2 N^{1-\alpha} N^\alpha = 
	\frac{C_\alpha(N)}{2}N^{3+\alpha} - N^{3}
	\end{equation}
	terms. 
	
	Note that in the case of $H_{\delta}^{\rm MIX}$ new factors appear for some $i,j$ with $\bar l(i)\neq \bar l(j)$. They still are Hamiltonians defined over two binary vectors $b_{t,l(i)},b_{t+1,l(j)}$, each generating $2^{2K}$ terms. Note that if for some $i',i$ we have $\bar l(i)=\bar l(i')$, then Hamiltonians 
	\begin{equation}
	\prod_{k=1}^K (1- (b_{t',\bar l(i),k} - b^i_{k})^2)\prod_{k=1}^K (1- (b_{t',\bar l(j),k} - b^j_{k})^2)
	\end{equation}
	and
	\begin{equation}
	\prod_{k=1}^K (1- (b_{t',\bar l(i'),k} - b^{i'}_{k})^2)\prod_{k=1}^K (1- (b_{t',\bar l(j),k} - b^j_{k})^2)
	\end{equation}
	are defined over the same variables, thus we only have to consider terms over different $l$ instead of different $i$.
	 Taking all of this plus over-counted terms into account, we obtain
	\begin{equation}
	NL(L-1)2^{2K}- NL(2L-2)2^{K} = NL^2 (2^{2K}- 2\cdot 2^K) \sim  N^3- \frac{2}{C_\alpha(N)}N^{3-\alpha}.
	\end{equation}

	Finally, let us consider $H_{\rm valid}^{\rm MIX}$. Note that the terms 
	produced by the latter addend were already considered. Hence we need to 
	consider the only terms from the first addend. Let $t$ be fixed. The terms 
	defined over $b$ only were already considered.  We have $\lceil \log 
	(KL)\rceil $ 1-local terms $\xi_{t,i}$. We have $KL\lceil \log (KL)\rceil$ 
	2-local terms of the form $b_{tlk}\xi_{t,i}$. Finally, we have 
	$\binom{\lceil \log (KL)\rceil}{2}$ terms of the form $\xi_{t,i}\xi_{t,j}$. 
	Taking all of this into account, for each $t$ we have
	\begin{equation}
	\lceil \log (KL)\rceil + KL\lceil \log (KL)\rceil + \binom{\lceil \log (KL)\rceil}{2} \sim (1-\alpha)\log N + \frac{\alpha(1-\alpha)}{C_\alpha(N)}N^{1-\alpha}\log^2 N + \frac{(1-\alpha)^2}{2}\log^2(N).
	\end{equation}
	
	Taking all numbers above we see that
	\begin{equation}
	\# {\rm terms} \sim \frac{C_\alpha (N)}{2} N^{3+\alpha}. 
	\end{equation}
	
\paragraph{Depth of the circuit}

Let us begin with the terms introduced by $H_{\neq}^{\rm MIX}$.
The Hamiltonian can be splitted into Hamiltonians defined over $b_{t,l},b_{t',l}$. Note that $l$ are always shared, thus for different $l$, the circuits can be implemented independently. For fixed $l$ we can use the same techniques as for $H_{\neq}^{\rm HOBO}$, which gives us the depth
\begin{equation}
N(2\cdot 2^{2K}-1) = 2 C_\alpha^2(N) N^{1+2\alpha} - N.
\end{equation}

Similarly, the terms introduced by $H_{\delta}^{\rm MIX}$ and not considered before are Hamiltonians defined over $b_{t,l},b_{t+1,l'}$ with $l\neq l'$. In this case we can independently implement Hamiltonians first for even $t$, then for odd $t$, which will double the circuit's depth. For fixed $t$ we can use similar technique of arranging the Hamiltonians as it was for the last addend of $H^{QUBO}$. This gives us the depth
\begin{equation}
2(L-1)(2\cdot 2^{2K}-1) \sim 4C_\alpha(N) N^{1+\alpha}.
\end{equation}

In the case of $H_\delta^{\rm HOBO}$ and $H_{\neq}^{\rm HOBO}$ we have not taken into account the overcounted elements defined over $b_{t,l}$ only. However, at it was in the case of $H_{\neq}^{\rm HOBO}$, they do not change the leading factor.

Elements from $H_{\rm valid}^{\rm MIX}$ can be implemented using a circuit of depth $3(N+\log(KL))+1$, since for each $t$, Hamiltonians can be applied independently, and each $\xi_{t,i}$ has to be applied with all $b_{ti}$ and $b_{t,j}$. Part $+1$ comes from applying 1-local terms $\xi_{t,i}$.

Taking all of the numbers derived above we finally have obtain
$\sim 2C^2_\alpha(N) N^{1+2\alpha}$.

Similarly as it was in the case of $H^{\rm HOBO}$, the minimum depth in case of mixed approach is $\frac{C_\alpha (N)}{2}N^{3+\alpha}/(\frac{\alpha}{C_\alpha (N)} N^{2-\alpha}\log N)= \frac{C_\alpha^2(N)}{2\alpha}N^{1+2\alpha}/\log N$.

\paragraph{Number of measurements}
For the sake simplicity, we will assume that $A_1,A_2\leq C\max_{i\neq j}W_{ij}$ and $B=1$. 
For general $b$ we have
\begin{equation}
\begin{split}
H(b) &\leq A_1N(LK-1)^2 + A_1N\cdot L\cdot K\cdot LK + A_2 \binom{N}{2} L \cdot 2K +BN \max_{i \neq j}W_{ij} \\
&\leq \left (2CNL^2K^2 + CN^2LK  +N\right )\max_{i \neq j}W_{ij}   \\
& \sim \left (\frac{2C\alpha^2}{C^2_\alpha (N)}N^{3-2\alpha}\log^2 N + CN^{3-\alpha}\log N  +N\right )\max_{i \neq j}W_{ij}.
\end{split}
\end{equation}
By this we conclude that $H(b) =\order{N^{3-\alpha}\log N}\max_{i \neq j}W_{ij}$. Note that the bound is achievable when taking $b_{tlk}\equiv 1$. 

\subsection{Proof for $H_{\rm valid}^{\rm HOBO}$}\label{sec_sup:h_valid_hobo}

\begin{theorem}
	Let $N>0$ and $K$ satisfies $2^{K-1}\leq N<2^K$.
	Let $\tilde b= \tilde b_{K-1}\dots \tilde b_{0}$ is a binary representation of $N-1$. Let $K_0\subseteq \{0,\dots, K-1\}$ be indices such that $k_0\in K_0$ iff $\tilde b_{k_0}=0$ Let
	\begin{equation}
	H(b) \coloneqq \sum_{k_0\in K_0} b_{k_0}\prod_{k=k_0+1}^{K-1} (1-(b_{k} - \tilde b_{k})^2)
	\end{equation}
	and $b= b_{K-1}\dots  b_{0}$ be a vector of bits encoding some number $n\in \{0,\dots,2^K-1\}$. Then $H(b)\geq0$, with equality iff $n<N$.
\end{theorem}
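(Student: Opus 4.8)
The plan is to first simplify the building blocks of $H$ and then reduce the statement to the familiar lexicographic comparison of two binary numbers. Since $b_k,\tilde b_k\in\{0,1\}$, each factor $1-(b_k-\tilde b_k)^2$ equals $1$ when $b_k=\tilde b_k$ and $0$ otherwise; that is, it is exactly the indicator that the $k$-th bits of $b$ and $\tilde b$ agree. Consequently every summand $b_{k_0}\prod_{k=k_0+1}^{K-1}(1-(b_k-\tilde b_k)^2)$ is a product of values in $\{0,1\}$ and hence itself lies in $\{0,1\}$. This immediately yields the inequality $H(b)\ge 0$ (the easy half) and reduces the equality characterization to showing that \emph{some} summand equals $1$ precisely when $n\ge N$.

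Next I would read off the meaning of the summand indexed by $k_0\in K_0$: it equals $1$ exactly when $b_{k_0}=1$ and $b$ agrees with $\tilde b$ on every higher position $k$ with $k_0<k\le K-1$. Because $k_0\in K_0$ means $\tilde b_{k_0}=0$, this says that, scanning from the most significant bit downward, $b$ first strictly exceeds $\tilde b=N-1$ at position $k_0$. The core of the argument is then the standard fact that the ordering of distinct binary strings is decided at their most significant differing bit: if $b\neq\tilde b$, let $k^\ast$ be the largest index with $b_{k^\ast}\neq\tilde b_{k^\ast}$; then $n>N-1$ holds iff $b_{k^\ast}=1$ and $\tilde b_{k^\ast}=0$, i.e. iff $k^\ast\in K_0$. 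In that case the summand at $k_0=k^\ast$ evaluates to $1$, and I would verify that no other summand can fire: for $k_0>k^\ast$ bit agreement forces $b_{k_0}=\tilde b_{k_0}=0$, killing the leading factor, whereas for $k_0<k^\ast$ the product ranges over $k=k^\ast$ and so contains a vanishing factor. Thus exactly one summand survives and $H(b)=1$.

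Finally I would assemble the two directions. If $n\ge N$, then $n>N-1$ and $b\neq\tilde b$, so the analysis above gives $H(b)=1>0$. Conversely, if any summand is nonzero there is some $k_0\in K_0$ with $b_{k_0}=1$ and matching higher bits, which forces $n>N-1$; by contraposition, $n<N$ yields $H(b)=0$. Here the boundary case $n=N-1$ is clean: then $b=\tilde b$, so every $b_{k_0}$ with $k_0\in K_0$ vanishes and all summands are $0$.

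I do not expect a genuine obstacle; the content is essentially the lexicographic comparison lemma. The only points requiring care are the index bookkeeping (confirming that the product over $k_0<k\le K-1$ captures precisely the higher-order bits, and that the empty product at $k_0=K-1$ is taken to be $1$) and checking that the hypothesis $2^{K-1}\le N<2^K$ guarantees $0\le N-1<2^K$, so that $\tilde b$ is a well-defined $K$-bit string and the comparison of $b$ with $\tilde b$ is meaningful.
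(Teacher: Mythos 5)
Your proof is correct and follows essentially the same route as the paper's: both arguments hinge on locating the most significant bit where $b$ and $\tilde b$ differ and observing that exactly the summand at that position fires iff that bit puts $b$ above $\tilde b$ (i.e.\ iff $n\ge N$), with the nonnegativity of each summand giving $H(b)\ge 0$ for free. Your packaging via indicator functions and contraposition is slightly more streamlined than the paper's explicit three-case computation, and your extra observation that $H(b)=1$ (not merely $H(b)>0$) when $n\ge N$ is a correct bonus, but the underlying argument is the same.
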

\begin{proof}
	Note that $(1-(b_{k} - \tilde b_{k})^2)$ is nonnegative, hence $H(b)\geq0$ 
	independently on $b$. Let $n= N-1$, which means $b=\tilde b$. Then
	\begin{equation}
	H(\tilde b) = \sum_{k_0\in K_0} \tilde b_{k_0}\prod_{k=k_0+1}^{K-1} 
	(1-(\tilde b_{k} - \tilde b_{k})^2) = \sum_{k_0\in K_0}  0\cdot 
	\prod_{k=k_0+1}^{K-1} (1-(\tilde b_{k} - \tilde b_{k})^2)  = 0.
	\end{equation}
	
	Let $n< N-1$. Then there exists a unique 
	$k'\in\{0,\dots, K-1\}\setminus K_0$ 
	such that for all $k>k'$ we have $b_{k} = \tilde b_k $,  $b_{k'} = 0$. In 
	other words, there exists a bit, which for $N-1$ is one, and for 
	$n$ is 0. It is the first one starting from most 
	significant one. Then we have
	\begin{equation}
	\begin{split}
	H( b) &=\sum_{k_0\in K_0} b_{k_0}\prod_{k=k_0+1}^{K-1} (1-(b_{k} - \tilde b_{k})^2) \\
	&= \sum_{\substack{k_0\in K_0\\k_0>k'}} b_{k_0}\prod_{k=k_0+1}^{K-1} (1-(b_{k} - \tilde b_{k})^2) +
	\sum_{\substack{k_0\in K_0\\k_0<k'}} b_{k_0}\prod_{k=k_0+1}^{K-1} (1-(b_{k} - \tilde b_{k})^2)\\
	&=  \sum_{\substack{k_0\in K_0\\k_0>k'}} \tilde b_{k_0}\prod_{k=k_0+1}^{K-1} (1-(\tilde b_{k} - \tilde b_{k})^2) +
	\sum_{\substack{k_0\in K_0\\k_0<k'}} b_{k_0}\prod_{k=k_0+1}^{K-1} (1-(b_{k} - \tilde b_{k})^2)\\
	&=  \sum_{\substack{k_0\in K_0\\k_0>k'}} 0\cdot \prod_{k=k_0+1}^{K-1} (1-(\tilde b_{k} - \tilde b_{k})^2) +
	\sum_{\substack{k_0\in K_0\\k_0<k'}} b_{k_0}(1-(b_{k'} - \tilde b_{k'})^2)\prod_{\substack{k=k_0+1\\k\neq k'}}^{K-1} (1-(b_{k} - \tilde b_{k})^2) \\
	&=0+ \sum_{\substack{k_0\in K_0\\k_0<k'}} b_{k_0}(1-(0 - 
	1)^2)\prod_{\substack{k=k_0+1\\k\neq k'}}^{K-1} (1-(b_{k} - \tilde 
	b_{k})^2) = 0.
	\end{split}
	\end{equation}
	
	Let $n>N$. Then there exists a unique $k'\in K_0$ such that for all $k>k'$ 
	we have $b_{k} = \tilde b_k $ and $b_{k'} = 1$. In other words, there 
	exists a bit, which for bit from $N-1$ is zero, and for bit from $n$ 
	is one It is the first one starting from most significant 
	one. Then, taking the addend to $k_0=k'$ we have
	\begin{equation}
	b_{k'}\prod_{k=k'+1}^{K-1} (1-(b_{k} - \tilde b_{k})^2) = 1 \prod_{k=k'+1}^{K-1} (1-(\tilde b_{k} - \tilde b_{k})^2) = 1,
	\end{equation}
	which  is enough to prove that $H(b)>0$ as each addend is nonnegative.
\end{proof}
\end{document}